\tikzset{every picture/.style={line width=0.75pt}} 
\newtheorem{theorem}{Theorem}
\newtheorem{lemma}[theorem]{Lemma}
\newtheorem{corollary}[theorem]{Corollary}
\newtheorem{proposition}[theorem]{Proposition}
\newcommand{\tr}{\mathrm{tr}}
\newcommand{\ve}{\mathbf{v}}
\newcommand{\w}{\mathbf{w}}
\newcommand{\ac}{\mathbf{a}}
\newcommand{\R}{\mathbb{R}}
\newcommand{\C}{\mathbb{C}}
\newcommand{\E}{\mathcal{E}}
\newcommand{\Lc}{\mathcal{L}}
\newcommand{\cc}{\mathbf{c}}
\newcommand{\KMB}{\text{\,KMB}}
\newcommand{\cW}{\mathcal{W}}
\newcommand{\cF}{\mathcal{F}}
\newcommand{\cH}{\mathcal{H}}
\newcommand{\dd}{\text{\rm{d}}}
\newcommand{\PiInner}[2]{\pi_{#1}^{#2}}
\newcommand{\ket}[1]{|#1\rangle}
\newcommand{\bra}[1]{\langle#1|}
\newcommand{\ketbra}[2]{|#1\rangle\langle#2|}
\title{Monotonicity of the von Neumann Entropy under\\ Quantum Convolution}
\author[1]{Salman Beigi\thanks{salman.beigi@gmail.com}}
\author[2]{Hami Mehrabi\thanks{hamimhrb@gmail.com}}
\affil[1]{\it \footnotesize School of Mathematics, Institute for Research in Fundamental Sciences (IPM), P.O. Box 19395-5746, Tehran, Iran}
\affil[2]{\it \footnotesize  School of Electrical and Computer Engineering, Cornell University, Ithaca, NY, 14850, USA}
\date{}
\begin{document}
\maketitle

\begin{abstract}
The quantum entropy power inequality, proven by K\"onig and Smith (2012), states that $\exp(S(\rho \boxplus \sigma)/m)\geq  \frac 12 (\exp(S(\rho)/m) + \exp(S(\sigma)/m))$ for two $m$-mode bosonic quantum states $\rho$ and $\sigma$. One direct consequence of this inequality is that the sequence $\big\{ S(\rho^{\boxplus n}): n\geq 1  \big\}$ of von Neumann entropies of symmetric convolutions of $\rho$ has a monotonically increasing subsequence, namely, $S(\rho^{\boxplus 2^{k+1}})\geq S(\rho^{\boxplus 2^{k}})$. In the classical case, it has been shown that the whole sequence of entropies of the normalized sums of i.i.d.~random variables is monotonically increasing. Also, it is conjectured by Guha (2008) that the same holds in the quantum setting, and we have $S(\rho^{\boxplus n}) \geq S(\rho^{\boxplus (n-1)})$ for any $n$. In this paper, we resolve this conjecture by establishing this monotonicity. We in fact prove generalizations of the quantum entropy power inequality, enabling us to compare the von Neumann entropy of the $n$-fold symmetric convolution of $n$ arbitrary states $\rho_1, \dots, \rho_{n}$ with the von Neumann entropy of the symmetric convolution of subsets of these quantum states. Additionally, we propose a quantum-classical version of this entropy power inequality, which helps us better understand the behavior of the von Neumann entropy under the convolution action between a quantum state and a classical random variable.
\end{abstract}

{\footnotesize
}

\section{Introduction}\label{Sec:Intro}

The Shannon--Stam inequality, also known as the Entropy Power Inequality (EPI), states that for two real-valued independent random variables $X$ and $Y$, which have probability density functions and finite (differential) entropies, i.e., $H(X),  H(Y)<+\infty$, we have
\begin{equation}\label{eq:classical_Shannon}
e^{2 H(X+Y)} \geq e^{2 H(X)} + e^{2 H(Y)}.
\end{equation}
This inequality was first introduced by Shannon in his seminal work in 1948~\cite{shannon1948mathematical}, although he did not provide a complete proof for it. Later, Stam proposed a complete proof for this inequality in~\cite{stam1959some}.

An immediate consequence of~\eqref{eq:classical_Shannon} is that for independent random variables $X$ and $Y$, we have 
$$H\Big(\frac{X+Y}{\sqrt{2}}\Big) \geq \frac{1}{2} \big(H(X) + H(Y)\big).$$
Furthermore, assuming that $X_1, X_2, \dots$ is a sequence of independent and identically distributed (i.i.d.)~random variables with finite entropies, and letting $S_n \coloneqq \frac{X_1 + \cdots + X_n}{\sqrt{n}}$ be their normalized sum, we obtain $H\big(S_{2^k}\big) \geq H\big(S_{2^{k-1}}\big)$. This result is consistent with the Central Limit Theorem (CLT), specifically its entropic version~\cite{Barron1986}, as we know that if $X_1$ is centered, then $S_n$ converges to a centered Gaussian random variable $Z$ with the same variance as $X_1$. The point is that this Gaussian random variable attains the maximum entropy among all random variables with the same variance. Thus, the subsequence $H\big(S_{2^k}\big)$ of the entropies converges \emph{monotonically} to $H(Z)$. 

Extending this monotonicity, Lieb conjectured in 1978 that $H(S_n) \geq H(S_{n-1})$ holds for any integer~$n$~\cite{lieb1978proof}. This conjecture was first resolved by Artstein, Ball, Barthe, and Naor in 2004~\cite{ABBN1} (See also~\cite{tulino2006monotonic} and~\cite{Courtade} for alternative proofs.)
Later, Madiman and Barron proposed a more general form of this monotonicity~\cite{madiman2007generalized}.
Specifically, they showed that for independent random variables $X_1, \dots, X_n$ with finite entropy (but not necessarily identical), and an arbitrary class $\mathcal{C}$ of subsets of the set $[n] \coloneqq \{ 1, \dots , n \}$, it holds that
\begin{equation}\label{eq:generalized_classical_Shannon}
\exp \Big( 2 H\big( X_1 + \cdots + X_n\big) \Big) \geq \frac 1r \sum_{\ve \in \mathcal{C}} \exp \bigg( 2 H\Big( \sum_{k \in \ve} X_k\Big) \bigg),
\end{equation}
where $r$ is the maximum number of times that an element $k\in [n]$ appears in subsets in $\mathcal C$.
Letting $X_1, \dots, X_n$ to be identical, and $\mathcal C$ be the class of all subsets of $[n]$ of size $n-1$ and $r=n-1$, the aforementioned monotonicity result follows after an appropriate scaling.

\bigskip

A quantum version of EPI was first proven by K\"onig and Smith~\cite{KS14}. They showed that for two $m$-mode bosonic quantum states $\rho$ and $\sigma$ with finite second moments, we have
\begin{equation}\label{eq:FirstQEPI-1}
S(\rho \boxplus_\eta \sigma ) \geq \eta S(\rho) + (1 - \eta) S(\sigma), 
\end{equation}
for all $\eta\in [0,1]$
and 
\begin{align}\label{eq:FirstQEPI-2}
    \exp\bigg(\frac{S(\rho \boxplus_\eta \sigma )}{m}\bigg) \geq   \eta \exp\Big(\frac{S(\rho)}{m}\Big) + (1- \eta)  \exp\Big(\frac{S(\sigma)}{m}\Big),
\end{align}
for $\eta=1/2$. 
Here, as will be discussed later, the quantum convolution $\rho \boxplus_\eta \sigma$ is defined in terms of the interaction of the two states $\rho, \sigma$ via a beam splitter with the transmissivity parameter $\eta$, and is really a quantum generalization of the convolution of density functions. 
Inequality~\eqref{eq:FirstQEPI-2} for arbitrary values of $\eta\in [0,1]$ was later proven in~\cite{de2014generalization}. See also~\cite{de2019entropy, de2018conditional, de2015multimode, PaTre2018} for other forms of the quantum EPI for bosonic systems, ~\cite{audenaert2016entropy} for a quantum EPI in the finite-dimensional case, and~\cite{Huber2017geometric} for a quantum EPI with classical registers.

Similarly to the classical setting, we can view the quantum CLT through the lens of the quantum EPI. The quantum CLT states that for a centered $m$-mode bosonic quantum state $\rho$ with finite second moments, the $n$-fold (symmetric) convolution of $\rho$ with itself denoted by $\rho^{\boxplus n}$, converges to a Gaussian quantum state $\rho_G$ with the same first and second moments as $\rho$~\cite{CH71}. This, in particular, implies that $\lim_{n \rightarrow \infty} S(\rho^{\boxplus n}) = S(\rho_G)$. Now, similarly to the classical case, one may ask whether this convergence is monotonic or not, i.e., whether we have
\begin{align}\label{eq:Guha}
    S\big(\rho^{\boxplus n}\big)\geq S\big(\rho^{\boxplus (n-1)}\big).
\end{align}
This inequality was first conjectured by Guha in 2008~\cite{guha2008multiple}. We note that as a consequence of the quantum EPI we have $S\big(\rho^{\boxplus 2^k}\big) \geq S\big(\rho^{\boxplus 2^{k-1}}\big)$, yet the quantum EPI does not imply~\eqref{eq:Guha}.

\subsection{Main results}

Our main result in this work is a generalization of the quantum EPI along the work of Madiman and Barron~\cite{madiman2007generalized}.

\begin{theorem}\label{Th:Quantum_Shannon}
Let $\rho_1, \dots, \rho_n$ be $m$-mode bosonic quantum states with finite second moments. Let $\mathcal{C}$ be an arbitrary collection of subsets of $ [n]$, and let $r$ be the maximum number of times an index in $[n]$ appears in subsets in $\mathcal{C}$, i.e., $r= \max_{k\in [n]} |\{\ve\in \mathcal C:\, k\in \ve\}|$. 
Then, we have
\begin{equation}\label{eq:generalizedQEPI}
\exp \bigg( \frac 1m  S(\rho^{\boxplus [n]}) \bigg) \geq \frac{1}{rn} \sum_{\ve \in \mathcal{C}} |\ve| \exp \bigg( \frac 1m  S(\rho^{\boxplus \ve}) \bigg),
\end{equation}
where for any $\ve = \{k_1, k_2, \dots, k_{|\ve|}\} \subseteq [n]$, we use the notation $\rho^{\boxplus \ve}$ to indicate the $|\ve|$-fold symmetric convolution of $ \rho_{k_1}, \dots, \rho_{k_{|\ve|}}$. 
In particular, for $\mathcal C=\mathcal{C}_{n-1}$ being the class of all subsets of $[n]$ of size $n-1$, we have
\begin{equation}\label{eq:Strong_Monotonicity}
\exp \bigg( \frac 1m  S\big(\rho^{\boxplus [n]}\big) \bigg) \geq \frac 1n \sum_{\ve \in \mathcal{C}_{n-1}}  \exp \bigg( \frac 1m  S\big(\rho^{\boxplus \ve}\big) \bigg).
\end{equation}
\end{theorem}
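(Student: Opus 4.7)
The plan is to adapt the Stam--Madiman--Barron strategy from the classical setting to the bosonic setting: reduce the generalized entropy power inequality to a generalized quantum Fisher information inequality by flowing each state along a quantum heat semigroup, then integrate. Concretely, for $t \geq 0$ I define smoothed states $\rho_i(t) := \rho_i \boxplus_{\eta(t)} \tau$, where $\tau$ is a reference Gaussian and $\eta(t)$ is chosen so that $t \mapsto \rho_i(t)$ realizes the quantum attenuator (heat) semigroup; at $t=0$ we recover the original states, and as $t \to \infty$ the smoothed states converge to a common Gaussian fixed point on which inequality~\eqref{eq:generalizedQEPI} holds with equality.

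Next I invoke the quantum de Bruijn identity, in the form used in~\cite{KS14, de2014generalization}, which yields
\begin{equation*}
\frac{d}{dt}\, S\!\left(\rho(t)\right) \;=\; c_m\, J\!\left(\rho(t)\right),
\end{equation*}
where $J$ denotes the bosonic quantum Fisher information and $c_m$ is a dimensional constant. Crucially, the attenuator semigroup commutes with beam-splitter convolution, so the $t$-derivatives of $S(\rho^{\boxplus \s}(t))$ for different $\s$ can all be expressed through Fisher information of convolutions of smoothed states. Letting $N_\s(t) := \exp(S(\rho^{\boxplus \s}(t))/m)$, the derivative of
\begin{equation*}
F(t) \;:=\; N_{[n]}(t) \;-\; \frac{1}{rn} \sum_{\s \in \mathcal{C}} |\s|\, N_\s(t)
\end{equation*}
depends only on Fisher information quantities, and since $F(\infty) = 0$, proving $F'(t) \leq 0$ reduces matters to a \emph{generalized quantum Fisher information inequality} roughly of the shape
\begin{equation*}
\frac{N_{[n]}(t)}{J\!\left(\rho^{\boxplus [n]}(t)\right)} \;\geq\; \frac{1}{rn} \sum_{\s \in \mathcal{C}} |\s|\, \frac{N_\s(t)}{J\!\left(\rho^{\boxplus \s}(t)\right)}\, ,
\end{equation*}
with the exact weights dictated by the chain rule applied to $\exp(S/m)$.

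The crux of the proof is this generalized quantum Fisher information inequality. I plan to lift the Madiman--Barron ANOVA / variance-drop argument to the bosonic setting: classically, the score of a sum is written as an average of partial-sum scores indexed by a covering class $\mathcal{C}$, and a Cauchy--Schwarz-type bound exploiting the fractional covering property (captured by $r$) then yields the inequality. In the quantum setting the score function is replaced by the bosonic score operator associated to the heat semigroup, and the ANOVA projection is replaced by the completely positive channel that extracts $\rho^{\boxplus \s}$ from the joint product state $\rho_1 \otimes \cdots \otimes \rho_n$ via partial beam-splitter unitaries and partial traces. The Cauchy--Schwarz step should be replaced by convexity and data-processing monotonicity of $J$, combined with the symplectic symmetries of the beam splitter.

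The principal obstacle is precisely this last step: constructing the quantum ANOVA channel and establishing the requisite convexity inequality for $J$ with the precise weights $|\s|/(rn)$, since the classical projection-onto-symmetric-subspaces argument has no immediate operator analog on Fock space. Once the Fisher information inequality is in hand, a standard Stam-type integration from $t=0$ to $t=\infty$, together with the Gaussian equality at $t=\infty$, completes the proof of Theorem~\ref{Th:Quantum_Shannon}.
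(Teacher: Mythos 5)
Your skeleton (de Bruijn identity plus a generalized Fisher-information inequality plus a Madiman--Barron-type variance-drop argument) is indeed the route the paper takes, but the proposal has a genuine gap: the entire content of the theorem lives in the step you explicitly defer. The ``generalized quantum Fisher information inequality'' is Theorem~\ref{Th:Shannon_Stam_KMB-intro} (and its quantum--classical extension, Theorem~\ref{Th:Shannon_Stam_KMB}), and its proof is not a soft combination of convexity and data processing for $J$. It requires three specific constructions that your plan does not supply: (i) because the KMB inner product is not linear in $\rho$, the KMB Fisher information is first rewritten via the integral representation~\eqref{eq:HiaRuskai} as an average of Fisher informations built from the linear inner products $\langle\cdot,\cdot\rangle_{\rho,k,t}$; (ii) the score operators of all the states $\rho^{\boxplus\s}\star X^{\boxplus\w}$ are embedded into a single Hilbert space by the $(\s,\w)$-symmetric lifting map of Subsection~\ref{Sec:LiftingMap}, whose key property (Lemma~\ref{NiceInner}) is that the lifted score of a sub-convolution projects, up to the factor $\sqrt{|\s|/\q}$, onto the lifted score of the full convolution; and (iii) the ANOVA decomposition is realized by the commuting self-adjoint projections $\E_k$, $\E_\ell^c$ of Subsection~\ref{Sec:Decomp}, after which a Cauchy--Schwarz step using the covering number $r$ produces exactly the weights $\q\,\mu_{(\s,\w)}^2/|\s|$. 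Your proposed substitutes (a CP ``extraction'' channel plus monotonicity of $J$) would at best give an unweighted data-processing bound, and there is no indication they yield the $|\s|/(r\q)$ coefficients; you correctly flag this as the principal obstacle, but that obstacle \emph{is} the theorem.

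There is a second, more structural mismatch in your reduction. You smooth every $\rho_i$ at the same rate and differentiate the entropy-power difference $F(t)=N_{[n]}(t)-\frac{1}{r\q}\sum_\s|\s|N_\s(t)$; making $F'(t)\le 0$ then requires a harmonic-mean (Stam-form) Fisher inequality whose weights depend on the unknown entropy powers $N_\s(t)$ at every time, and this does not follow termwise from the linear inequality of Theorem~\ref{Th:Shannon_Stam_KMB-intro} without a circular bootstrap. The paper avoids this: it fixes an arbitrary distribution $\mu$ on $\mathcal C$, diffuses $\rho^{\boxplus\s}$ at the $\s$-dependent rate $b_\s=r\q\,\mu_\s/|\s|$ (which is what produces the $\mu_\s^2$ factors in~\eqref{eq:Aux_Entropy_KMB} and lets Theorem~\ref{Th:Shannon_Stam_KMB} apply verbatim), integrates to obtain the linear entropy inequality~\eqref{eq:Generalized_Shannon_Stam} with its logarithmic correction term, and only then optimizes $\mu$ to get the exponential form~\eqref{eq:generalizedQEPI}. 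Implementing the $\s$-dependent rates consistently is itself nontrivial and is the reason the paper introduces auxiliary classical Gaussian registers and proves quantum--classical versions of both main theorems; your uniform-rate attenuator smoothing (whose generator, incidentally, is not the heat semigroup for which the KMB score operator $[\ac_j,\log\rho]$ is the natural object) bypasses none of this.
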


We note that~\eqref{eq:Strong_Monotonicity} in the case where all states are equal, meaning that $\rho_1=\cdots=\rho_n=\rho$, resolves the monotonicity conjecture~\eqref{eq:Guha}.

Our proof of this theorem is based on an inequality in terms of a quantum Fisher information, called the Kubo--Mori--Bogoliubov (KMB) Fisher information. We will later give the precise definition of the KMB~Fisher information, but briefly speaking, this Fisher information denoted by $I_{\KMB}(\rho)$, can be understood as the derivative of the von Neumann entropy, when perturbing the state by the quantum heat semigroup. This is why the proof of Theorem~\ref{Th:Quantum_Shannon} is based on our second main result, that is an inequality in terms of the KMB~Fisher information.

\begin{theorem}\label{Th:Shannon_Stam_KMB-intro}
Let $\rho_1, \dots, \rho_n$ be $m$-mode bosonic quantum states with finite second moments. Let $\mathcal{C}$ be an arbitrary collection of subsets of $[n]$ and let $r$ be the maximum number of times an index in $[n]$ appears in subsets in $\mathcal{C}$. Then, for any probability distribution $\mu$ on $\mathcal{C}$, we have
\begin{align}\label{eq:Shannon_Stam_KMB-intro}
	I_\KMB\big(\rho^{\boxplus [n]} \big) \leq rn\, \sum_{\ve \in \mathcal{C}} \frac{1}{| \ve |}  \mu_{\ve}^2  \,I_\KMB\big(\rho^{\boxplus \ve} \big).
\end{align}
\end{theorem}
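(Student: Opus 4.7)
The plan is to transport the Madiman--Barron subset-sum Fisher information inequality~\cite{madiman2007generalized} into the quantum KMB framework. Classically the argument uses two ingredients: (i) the \emph{score projection identity} that the score of $T=\sum_i X_i$ equals the conditional expectation of the score of $T_\s=\sum_{i\in \s} X_i$ given $T$; and (ii) a Cauchy--Schwarz estimate applied to a carefully weighted linear combination of these conditional scores. I will mimic both steps in the bosonic quantum setting using the KMB logarithmic derivative and the associated KMB inner product.

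First, I will represent $I_\KMB(\rho) = \langle L_\rho, L_\rho\rangle_\rho$ via the KMB score operator $L_\rho$, where $\langle\cdot,\cdot\rangle_\rho$ is the KMB inner product. The central structural step is the quantum analog of the score projection identity: for each $\s \subseteq [\q]$, there is a quantum conditional expectation $\mathcal{E}_\s$, induced by tracing out the modes complementary to $\s$ in the beam-splitter realization of $\rho^{\boxplus [\q]}$, such that $L_{\rho^{\boxplus[\q]}} = \mathcal{E}_\s(L_{\rho^{\boxplus \s}})$ in the KMB sense. I plan to derive this by differentiating a de~Bruijn-type identity for $I_\KMB$ along the quantum heat semigroup, exploiting the compatibility between Gaussian smoothing and beam-splitter convolution that already underlies the K\"onig--Smith proof of~\eqref{eq:FirstQEPI-2}~\cite{KS14}.

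Second, I will combine these projection identities across $\mathcal{C}$. For any probability distribution $\mu$ on $\mathcal{C}$, I aim to write
\[
L_{\rho^{\boxplus[\q]}} \;=\; \sum_{\s \in \mathcal{C}} \gamma_\s \, \mathcal{E}_\s\!\left(L_{\rho^{\boxplus \s}}\right),
\]
with weights $\gamma_\s$ calibrated so that each index $k\in[\q]$ is covered with the correct aggregate weight; this is where the multiplicity bound $|\{\s\in\mathcal{C}: k\in\s\}|\leq r$ and the factor $|\s|$ enter via the subset identity $L_{\rho^{\boxplus\s}} = |\s|^{-1}\sum_{k\in\s} L_{\rho_k}$ pulled back through $\mathcal{E}_\s$. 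Applying Cauchy--Schwarz in the KMB inner product to this representation, and using the KMB contractivity of the conditional expectation so that $\langle \mathcal{E}_\s(L_{\rho^{\boxplus \s}}), \mathcal{E}_\s(L_{\rho^{\boxplus \s}})\rangle \leq I_\KMB(\rho^{\boxplus \s})$, will yield~\eqref{eq:Shannon_Stam_KMB-intro}, with the coefficient $r\q\,\mu_\s^2/|\s|$ emerging after regrouping the cross terms by the index $k$ they involve.

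The main obstacle will be the first step, the KMB score projection identity. Classically this is a one-line Bayes-rule computation, but quantum mechanically the beam splitter entangles the selected and complementary modes, and the KMB score is defined through a spectral logarithmic derivative that does not commute nicely with partial trace. I plan to circumvent this by working with the integrated form of the identity provided by the quantum de~Bruijn identity connecting $I_\KMB$ to the derivative of $S$ along the Gaussian quantum heat semigroup, and then differentiating a corresponding entropy-level projection, so that the conditional-expectation interpretation at the level of scores emerges from tools already developed in~\cite{KS14,de2014generalization}.
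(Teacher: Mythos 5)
Your high-level skeleton --- lift the score of each $\rho^{\boxplus \s}$ into a common space, identify the score of $\rho^{\boxplus [\q]}$ as a projection of it, and then apply a weighted Cauchy--Schwarz --- is the same Madiman--Barron strategy the paper follows. But two of your steps fail as stated. First, you propose to run the whole argument in the KMB inner product. The KMB inner product is not linear in the state: the kernel $\psi_\KMB(x,y)=(x-y)/(\log x-\log y)$ does not satisfy the tensorization identity $\pi_{\rho_A\otimes\rho_B}^{\psi}(T_A\otimes I_B)=\pi_{\rho_A}^{\psi}(T_A)\otimes\rho_B$, so the maps you call $\mathcal{E}_{\s}$ (partial traces against the complementary modes) are \emph{not} self-adjoint contractive projections with respect to $\langle\cdot,\cdot\rangle_{\rho,\KMB}$, and both your ``KMB contractivity'' step and any orthogonality between components break down. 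The paper's way around this is to write $\phi_\KMB(x,y)=\int_0^1\big(\phi_{1,t}+\phi_{2,t}\big)\tfrac{t+1}{2}\,\dd\omega(t)$ with $\phi_{k,t}$ reciprocals of kernels \emph{linear} in $(x,y)$, prove the inequality for each linear inner product $\langle\cdot,\cdot\rangle_{\rho,k,t}$ separately, and integrate. Without some such replacement your conditional-expectation calculus has no foundation. Relatedly, your plan to derive the score projection identity ``by differentiating a de Bruijn-type identity'' is not a workable route: the identity is an exact algebraic statement, proved in the paper via the commutator--convolution rule $\sqrt{\eta}\,[\ac_j,\rho\boxplus_\eta\sigma]=[\ac_{j,1},\rho]\boxplus_\eta\sigma$ and an explicit characteristic-function computation for the symmetric lifting map (Proposition~\ref{Pr:InnerProduct_Symmetric} and Lemma~\ref{NiceInner}); nothing about it comes from differentiating an entropy. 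Also, the ``subset identity'' $L_{\rho^{\boxplus\s}}=|\s|^{-1}\sum_{k\in\s}L_{\rho_k}$ you invoke is false even classically --- the score of a sum is a conditional expectation of individual scores, not their average.

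Second, your Cauchy--Schwarz step cannot produce the factor $r$ as described. Applying Cauchy--Schwarz directly to $\sum_{\s\in\mathcal{C}}\gamma_\s\,\mathcal{E}_\s(L_{\rho^{\boxplus\s}})$ gives a prefactor $|\mathcal{C}|$, and ``regrouping cross terms by the index $k$'' does not control them, because distinct $\s,\s'\in\mathcal{C}$ generically overlap and their lifted scores are far from orthogonal. The essential missing ingredient is the variance-drop (ANOVA/Hoeffding) decomposition: the commuting self-adjoint projections $\mathcal{P}_{\s'}$ of~\eqref{VarianceDrop} split the tensor-product Hilbert space into orthogonal components indexed by subsets, the lifted score of $\rho^{\boxplus\s}$ lives entirely in components with $\s'\subseteq\s$, its $\s'=\varnothing$ component vanishes, and each nonempty $\s'$ is contained in at most $r$ members of $\mathcal{C}$ --- so Cauchy--Schwarz applied \emph{within each component} has at most $r$ terms. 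That is where $r$ comes from, and it is exactly the step your sketch omits. You would need to supply this decomposition (and verify that the $\mathcal{P}_{\s'}$ are orthogonal projections for your inner product, which again forces you off the KMB inner product) before the argument closes.
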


We indeed prove a more general form of the above theorems in which classical random variables are also present besides the quantum states. We refer to Section~\ref{Sec:proof_mainres} for the statements of these generalized results.

\subsection{Proof techniques}

As mentioned above, our first step to prove Theorem~\ref{Th:Quantum_Shannon} is to reduce it into a Shannon--Stam type inequality at the level of the KMB~Fisher information as stated in Theorem~\ref{Th:Shannon_Stam_KMB-intro}. This reduction is based on the quantum de Bruijn identity, which expresses the von Neumann entropy $S(\rho)$ in terms of the integration of the KMB~Fisher information when the state $\rho$ evolves under the action of the quantum heat semigroup. 
The idea of using the quantum de Bruijn identity was first applied in~\cite{KS14} and is a standard tool in the proof of quantum EPIs.

We already face a challenge in the reduction of Theorem~\ref{Th:Quantum_Shannon} to Theorem~\ref{Th:Shannon_Stam_KMB-intro} via the quantum de Bruijn identity. The point is that, as we are working with arbitrary subsets, it is not clear how individual states should be evolved under the heat semigroup in a consistent way. To address this issue, we prove a generalization of Theorem~\ref{Th:Shannon_Stam_KMB-intro} which also involves classical registers. The point is that the action of the quantum heat semigroup can be understood as the convolution with a Gaussian random variable. Thus, bringing classical registers into the picture would rectify the problem of independently evolving various quantum states via the quantum heat semigroup. With this idea, we indeed reduce the proof of Theorem~\ref{Th:Quantum_Shannon} to that of a generalization of Theorem~\ref{Th:Shannon_Stam_KMB-intro}.

To prove (the generalization of) Theorem~\ref{Th:Shannon_Stam_KMB-intro}, we draw ideas from~\cite{madiman2007generalized}. A main idea in~\cite{madiman2007generalized} is to decompose a tensor product Hilbert space into a certain direct sum of orthogonal subspaces. This allows to express the norm of a vector in the tensor product space in terms of the norm of its projections on the orthonormal subspaces. Now the point
is that the KMB~Fisher information can be expressed as the norm of an operator called the \emph{score operator}, living in some Hilbert space. Another key observation is that 
the score operator behaves nicely under quantum convolution, and the above-mentioned decomposition is such that the corresponding projections of the score operators have operational meanings. These two facts can be used to bound $I_{\KMB}(\rho^{\boxplus [n]})$ in terms of the KMB~Fisher information for various states appearing on the right hand side of~\eqref{eq:Shannon_Stam_KMB-intro}. Nevertheless, to formalize these high-level ideas in the quantum case, we need to overcome two main challenges.

The first challenge is to view the score operator associated with a state $\rho^{\boxplus \ve}$ as an operator in some tensor product Hilbert space. The point is that $\rho^{\boxplus \ve}$ and its corresponding score operator are $m$-mode operators and lack an inherent tensor product structure. To overcome this difficulty, we generalize the idea of \emph{symmetric lifting map} first introduced in~\cite{beigi2023towards}. This map allows the lifts of the score operators of states of the form $\rho^{\boxplus \ve}$, for subsets $\ve \subseteq [n]$, to simultaneously live in a single tensor product Hilbert space in a quite natural way. 

The second challenge is that our generalized lifting map is not an isometry and
does not preserve the norm of score operators. The point is that the inner product based on which the KMB~Fisher information is defined, is hard to work with, and because of non-commutativity, does not satisfy some desired linearity properties. Our idea to overcome this challenge is to write the KMB~Fisher information as an integral over some other quantum Fisher informations that do satisfy the linearity property. In this way, the proof of Theorem~\ref{Th:Shannon_Stam_KMB-intro} reduces to the proof of the same theorem for other quantum Fisher informations that are easier to work with. 

Resolving the above two main challenges, we then apply the decomposition idea of~\cite{madiman2007generalized} and finish the proof of Theorem~\ref{Th:Shannon_Stam_KMB-intro}.

\subsection{Structure of the paper}
The rest of this paper is structured as follows. In Section~\ref{sec:Pre}, we review some fundamental definitions regarding bosonic quantum systems. Additionally, in Subsection~\ref{subsec:Con}, we discuss the definition of convolution in the quantum setting, and in Subsection~\ref{subsec:bruijn} review the notion of the quantum heat semigroup and the quantum de Bruijn identity. In Section~\ref{Sec:Methods}, we develop the main tools needed to prove the generalization of Theorem~\ref{Th:Shannon_Stam_KMB-intro}. Specifically, in Subsection~\ref{Sec:KMBInner}, we review the notion of the KMB~inner product and introduce an integral representation for the KMB~Fisher information, enabling us to work with linear inner products instead of the KMB~inner product. Moreover, in Subsection~\ref{Sec:LiftingMap}, we introduce our generalized symmetric lifting map, which is an essential tool in our arguments. Additionally, in Subsection~\ref{Sec:Decomp}, generalizing the work of~\cite{madiman2007generalized}, we discuss the method of the decomposition of a tensor product Hilbert space. After developing all these tools, we state the proof of our main results in Section~\ref{Sec:proof_mainres}.


\section{Preliminaries}\label{sec:Pre}
In this part, we review some basic definitions related to bosonic quantum systems. For a more detailed review, we refer the reader to~\cite{Serafini}.

Starting with a generic separable Hilbert space $\mathcal{H}$, an operator $T$ acting on $\mathcal{H}$, is called a trace class operator if $\tr\big(|T|\big) < \infty$ where $|T|=\sqrt{T^\dagger T}$. For two operators $T$ and $R$, their Hilbert--Schmidt inner product is defined as $\langle T,R \rangle \coloneqq \tr (T^\dagger R)$. This inner product induces a norm, and an operator with a finite Hilbert--Schmidt norm is called a Hilbert--Schmidt operator.

The Hilbert space of an $m$-mode bosonic quantum system is isomorphic to the space of all square-integrable, complex valued functions on $\R^m$, which we denote by $\mathcal{H}_m = L^2(\R^m)$. We note that $L^2(\R^m)=L^2(\R)^{\otimes m}$, so $\mathcal H_m=\cH_1^{\otimes m}$ has a natural tensor product structure.
An $m$-mode bosonic quantum state is a positive semi-definite operator acting on $\mathcal{H}_m$ with a trace equal to $1$, called a density operator. 

Fock basis is a standard orthonormal basis for $\mathcal H_m=\cH_1^{\otimes m}$ consisting of vectors of the form $\ket{\ell_1, \dots, \ell_m} = \ket{\ell_1}\otimes \cdots \otimes \ket{\ell_m}$ where $\ell_1, \dots, \ell_m$ are non-negative integers. The vectors $\ket{\ell_1, \dots, \ell_m}\in \mathcal H_m$  are often called number states. The, \emph{annihilation operators} $\ac_1, \dots, \ac_m$ and their adjoint $\ac_1^\dagger, \dots, \ac_m^\dagger$ called the \emph{creation operators}, are defined by their action on number states as
$$\ac_j \ket{\ell_1, \dots, \ell_m} = \sqrt{\ell_j} \ket{\ell_1, \dots,\ell_{j-1}, \ell_j-1, \ell_{j+1}, \dots, \ell_m},$$ 
and  
$$\ac^\dagger_j \ket{\ell_1, \dots, \ell_m} = \sqrt{\ell_j+1} \ket{\ell_1, \dots,\ell_{j-1}, \ell_j+1, \ell_{j+1}, \dots, \ell_m}.$$
Observe that $\ac_j, \ac_j^\dagger$ act on the $j$-th factor in $\cH_m=\cH_1^{\otimes m}$.
These operators satisfy the canonical commutation relations
\begin{align*}
    [\ac_j, \ac_k^\dagger] = \delta_{jk} I,\qquad   [\ac_j, \ac_k] = 0,
\end{align*}
where $[A, B] = AB - BA$ denotes the commutator of two operators, and $\delta_{jk}$ is the Kronecker's delta function. Moreover, $I$ is the identity operator that acts on $\mathcal H_m$. The operator $\sum_{j=1}^m \ac_j^\dagger \ac_j$ is called the number operator and its action on number states is given by
$$\Big(\sum_{j=1}^m \ac_j^\dagger \ac_j\Big)\ket{\ell_1, \dots, \ell_m} = ( \ell_1+\cdots + \ell_m) \ket{\ell_1, \dots, \ell_m}.$$

For any $z=(z_1, \dots, z_m)^\top \in \C^m$, the \emph{displacement operator} acting on $\mathcal{H}_m=\cH_1^{\otimes m}$ is given by 
$$D_z := \bigotimes_{j=1}^m \exp \big( z_j \ac_j^\dagger - \bar{z}_j \ac_j \big).$$ 
These operators play the role of shifting a random variable in the quantum setting as we have $D_z^\dagger \ac_j D_z = \ac_j + z_j$, and $D_z^\dagger \ac_j^\dagger D_z = \ac_j^\dagger + \bar{z}_j$ for any $1 \leq j \leq m$. Moreover, it can be verified that, for any $z , w \in \mathbb{C}^m$, we have
\begin{equation}\label{eq:productRule}
D_z D_w = e^{\frac{1}{2} (z^{\top} \bar w - \bar z^{\top} w)} D_{z+w}, \qquad \text{ and }\qquad D_z^\dagger = D_{-z}.
\end{equation}
Also, as a direct consequence of~\eqref{eq:productRule}, we get 
\begin{equation}\label{eq:productRuleN}
D_{w}^\dagger D_z D_w = e^{z^\top \bar w - \bar z^\top w} D_z.
\end{equation}

For any trace class operator $T$, the \emph{quantum characteristic function} of $T$, is defined as 
$$\chi_T(z) := \tr \big( T D_z \big),$$ for any $z \in \C^m$. The quantum characteristic function is sufficient to fully recover the operator as
\begin{equation}\label{eq:inverse_Char}
T = \frac{1}{\pi^m} \int_{\mathbb{C}^m} \chi_T(z) D_{-z} \dd^{2m} z.
\end{equation}

We say that $\rho$ has finite second-order moments if $\sum_{j=1}^m \tr(\rho \ac_j^\dagger \ac_j)<+\infty$. We should be careful about the interpretation of $\tr(\rho \ac_j^\dagger \ac_j)$ since $\ac_j^\dagger \ac_j$ is an unbounded operator. In this paper, by $\tr(\rho \ac_j^\dagger \ac_j)$ we indeed mean $ \tr(\ac_j\rho \ac_j^\dagger)$. Therefore, $\rho$ has finite second-order moments if $\sum_{j=1}^m \tr(\rho \ac_j^\dagger \ac_j) = \sum_{j=1}^m \tr(\ac_j \rho \ac_j^\dagger)<+\infty$, or equivalently  
$$\sum_{\ell_1, \dots, \ell_m=0}^{+\infty} (\ell_1+\cdots + \ell_m) \bra{\ell_1, \dots, \ell_n} \rho \ket{\ell_1, \dots, \ell_m}<+\infty.$$
Similarly, for any possibly unbounded operator $X$, by $\tr(\rho X^\dagger X)$ we mean $\tr(X\rho X^\dagger)$. Therefore, for a positive semidefinite operator $A$ we write $\tr(\rho A) = \tr(A^{1/2}\rho A^{1/2})$. Also, for an arbitrary self-adjoint operator $A$, we first express it as $A= A_+-A_-$ where $A_+, A_-$ are positive semidefinite, and then write $\tr(\rho A)= \tr(\rho A_+) - \tr(\rho A_-)$, when both  $\tr(\rho A_+), \tr(\rho A_-)$ are finite.
For more details on the definition of  $\tr (\rho A)$ for any operator $A$, we refer to~\cite[Section 2]{beigi2023towards} and references therein.

The von Neumann entropy of an $m$-mode bosonic quantum state $\rho$ is given by 
$$S(\rho) = -\tr(\rho \log \rho).$$
It can be verified that if $\rho$ has finite second-order moments, then $S(\rho)<+\infty$.

\subsection{Convolution operation}\label{subsec:Con}

In the classical setting, for two independent random variables $X$ and $Y$, the random variable $Z = \sqrt{\lambda} X + \sqrt{1-\lambda} Y$ is the normalized sum of $X$ and $Y$ with respect to the parameter $0 \leq \lambda \leq 1$. In this paper, we use the notation $Z = X \boxplus_\lambda Y$ to denote this normalized sum and refer to it as the classical convolution of $X, Y$. This is because the resulting probability density function of $Z$ is equal to the convolution of the density functions of $\sqrt{\lambda} X$ and $\sqrt{1-\lambda} Y$, assuming that $X$ and $Y$ are independent.

The counterpart of this action in the quantum setting is called \emph{quantum convolution}. Let $\rho$ and $\sigma$ be two $m$-mode bosonic quantum states. We define the convolution of $\rho$ and $\sigma$ with parameter $\eta \in [0,1]$ as
\begin{equation}\label{def:Quantum_Conv}
\rho \boxplus_\eta \sigma := \tr_2 \Big( U_\eta \, (\rho \otimes \sigma) \, U_\eta^\dagger \Big),
\end{equation}
where $U_{\eta}$ is the Gaussian unitary of beam splitter with transmissivity parameter $0 \leq \eta \leq 1$, which can be written as
\[
U_{\eta} := \exp\bigg(\arccos(\sqrt{\eta}) \sum_{j=1}^{m}\big(\ac_{j,1}^{\dagger} \ac_{j,2} - \ac_{j,1} \ac_{j,2}^{\dagger}\big)\bigg).
\]
We note that $U_\eta$ acts on $\cH_m\otimes \cH_m$ corresponding to $2m$ annihilation operators which we denote by $\ac_{j,1}, \ac_{j,2}$ for $j=1, \dots, m$.
Therefore, $\ac_{j,1}$ and $\ac_{j,2}$ are the $j$-th annihilation operators acting on the first and second subsystems, respectively. Also, the partial trace in~\eqref{def:Quantum_Conv} is taken with respect to the second subsystem.
It can be verified that $U_\eta$ transforms the annihilation and creation operators as
\begin{align} 
	\begin{cases} \label{BeamSplitter-quadratures}
		U_{\eta} \ac_{j,1} U_{\eta}^{\dagger} = \sqrt{\eta} \ac_{j,1} - \sqrt{1 - \eta} \ac_{j,2},\\
		U_{\eta} \ac_{j,2} U_{\eta}^{\dagger} =  \sqrt{1 - \eta} \ac_{j,1} + \sqrt{\eta} \ac_{j,2}. 
	\end{cases}
\end{align}
Using these relations, it is easily shown that 
\begin{equation}\label{eq:convolution-char}
\chi_{\rho \boxplus_{\eta} \sigma}(z) = \chi_{\rho}\big(\sqrt{\eta} z\big) \, \chi_{\sigma}\big(\sqrt{1 - \eta} z\big).
\end{equation}

In addition to the fully quantum setting, we can convolve a quantum state and a classical random variable. Let $\rho$ be an $m$-mode bosonic quantum state, and $X$ be a $\C^{m}$-valued random vector with probability density function $p_X$. The \emph{quantum-classical convolution} of $\rho$ and $X$ with parameter $t \geq 0$ is defined as
\begin{equation}\label{def:Quantum_Class_Conv}
\rho \star_t X = \int p_X(x) D_{\sqrt{t} x} \rho D_{\sqrt{t} x}^\dagger \dd^{2m} x.
\end{equation}
That is, $\rho\star_t X$ is again an $m$-mode state resulting from randomly displacing $\rho$ with the displacement parameter chosen according to $X$ after scaling with $\sqrt t$. For $t=1$, we remove the parameter $t$ in the quantum-classical convolution and denote it simply by $\rho\star X$. To the best of our knowledge, the definitions of quantum convolution, and quantum-classical convolution first appeared in~\cite{werner1984quantum}.

The characteristic function of the quantum-classical convolution can be computed as

\begin{align*}
\chi_{\rho \star_t X}(z) &= \tr \Big( ( \rho \star_t X) D_z \Big) \\
& = \int  p_X(x)  \tr \Big( \rho D_{\sqrt{t} x}^\dagger D_z D_{\sqrt{t} x} \Big) \dd^{2m}x \\
&= \tr (\rho D_z )  \int  p_X(x)  e^{\sqrt{t}(z^\top \bar x - \bar z^\top x)} \dd^{2m}x,
\end{align*}
where in the last line we use~\eqref{eq:productRuleN}. Motivated by this equation, we define the symplectic characteristic function of the $\C^{m}$-valued random vector $X$ as
\begin{align}\label{eq:chi-classic-D}
    \chi_X(z) = \int p_X(x)  D_\cc(z,x) \dd^{2m}x,\qquad \quad D_\cc(z,x) = e^{(z^\top \bar x - \bar z^\top x)}.
\end{align}
We understand the function $D_\cc(z,x)$ as the classical counterpart of displacement operators. With this notation in hand, we have 
\begin{align}\label{eq:char-q-c-conv}
\chi_{\rho \star_t X}(z) = \chi_\rho(z) \, \chi_X\big(\sqrt t z\big).
\end{align}
Also, it is evident that for two $\C^{m}$-valued, independent random vectors $X$ and $Y$, it holds that
$$\chi_{X\boxplus_\lambda Y}(z) = \chi_{X}\big(\sqrt \lambda z\big) \, \chi_{Y}\big(\sqrt{1-\lambda} z\big).$$

\medskip

The classical and quantum convolutions behave nicely with respect to each other. For $m$-mode bosonic quantum states $\rho, \sigma$, independent $\C^{m}$-valued random vectors $X, Y$, $\eta \in [0,1]$ and $t_1, t_2 \geq 0$, we have 
\[
	(\rho \star_{t_1} X) \boxplus_\eta  (\sigma \star_{t_2} Y) = (\rho \boxplus_\eta \sigma) \star_s (X \boxplus_\lambda Y),\qquad \quad s = t_1  \eta + t_2  (1-\eta) \text{ and } \lambda = t_1  \eta /s.
\]

We define the symmetric convolution of $n$ quantum states inductively. Letting $\rho_1, \dots, \rho_n$ be $m$-mode bosonic quantum states, we define their symmetric convolution by
$$\rho_1 \boxplus \cdots \boxplus \rho_n \coloneqq \big( \rho_1 \boxplus \cdots \boxplus \rho_{n-1} \big) \boxplus_{1 - \frac 1n} \rho_n.$$ 
It can be easily verified that 
\begin{align}\label{eq:symm-conv-char}
    \chi_{\rho_1 \boxplus \cdots \boxplus \rho_n }(z) = \prod_{k=1}^n \chi_{\rho_k}(z/\sqrt n).
\end{align}    
For simplicity of notation we often denote $\rho_1 \boxplus \cdots \boxplus \rho_n$ by $\rho^{\boxplus [n]}$. 
Also, for a subset $\ve\subseteq [n]$ we let $\rho^{\boxplus \ve}$ be the symmetric convolution of states $\rho_k$ with $k\in \ve$. These notations are unambiguous since by~\eqref{eq:symm-conv-char} symmetric convolution is invariant under permutations of the states. 

Using similar notations, for $\C^{m}$-valued independent random vectors $X_1, \dots, X_{n}$, we define their symmetric convolution by
$$X^{\boxplus [n]}=X_1 \boxplus \cdots \boxplus X_{n} \coloneqq \big( X_1 \boxplus \cdots \boxplus X_{n-1} \big) \boxplus_{1 - \frac{1}{n}} X_n.$$ 
We note that $X^{\boxplus [n]}$ is the same as the symmetric normalized sum $\frac{X_1 + \cdots + X_n}{\sqrt n}$. For a subset $\ve\subseteq [n]$, the random variable $X^{\boxplus \ve}$ is defined similarly.

An interesting property of convolution is that it interacts smoothly with the commutator action involving annihilation and creation operators.

\begin{lemma}\cite[Lemma 1]{beigi2023towards} Let $\rho$ and $\sigma$ be two $m$-mode bosonic quantum states with finite second-order moments, and $X$ be a $\C^{m}$-valued random vector with finite second-order moments. Let $0 \leq \eta \leq 1$ and $t \geq 0$ be the convolution parameters. Then, we have
\begin{equation}\label{eq:Comm_Conv}
\sqrt{\eta} \big[\ac_{j}, \rho \boxplus_{\eta} \sigma\big] = \big[\ac_{j,1},\, \rho\big] \boxplus_{\eta} \sigma, \quad\quad   \big[\ac_{j}, \rho \star_t X \big] = \big[\ac_{j,1},\, \rho\big] \star_t X,
\end{equation}
where $\ac_j$ is the $j$-th annihilation operator, and $\ac_{j, 1}$ is the $j$-th annihilation operator of the first subsystem.
\label{lem:from-beigi2023towards} 
\end{lemma}

Note that, since we assume $\rho$ has finite second-order moment, $\big[\ac_{j,1},\, \rho\big]$ is a trace class operator. Moreover, it is not hard to generalize the definition of convolutions in the previous subsection for all trace class operators. This is why the right hand sides in~\eqref{eq:Comm_Conv} are well-defined.   

The proof of the first equation in~\eqref{eq:Comm_Conv} is given in~\cite[Lemma 1]{beigi2023towards}. The proof of the second equation is similar and is skipped.

\subsection{Quantum de Bruijn identity}\label{subsec:bruijn}

In the classical setting, de Bruijn identity expresses the entropy of a random variable in terms of its Fisher information. A variant of the de Bruijn identity states that for a random variable $X$ with finite variance, it holds that
\begin{equation}\label{eq:classical_deBruijn}
D(X \| Z) = \int_{0}^{1} \frac{J(\sqrt{t} X + \sqrt{1-t} Z)}{2t} \dd t,
\end{equation}
where $Z$ is an independent Gaussian random variable with the same first and second moments as $X$, and $D(X \| Z)$ denotes the Kullback--Leibler~divergence.
Moreover, $J(\cdot)$ is the Fisher information distance given by
\[
J(Y) = \text{Var}(Y) I(Y) - 1, 
\] 
where
\[I(Y) = \mathbb{E}_Y\bigg[\Big(\frac{\dd}{\dd y} \log p_Y(y)\Big)^2\bigg],\]
is the Fisher information.
We observe that the Fisher information $I(Y)$ is the norm of the score function $\frac{\dd}{\dd y} \log p(y)$ with respect to the inner product $\langle h , g \rangle_Y = \mathbb{E}_Y [\overline{h(y)} g(y)]$. In the classical setting,~\eqref{eq:classical_deBruijn} enables us to reduce the proof of entropic inequalities to that of inequalities in terms of Fisher information. 

Returning to the quantum setting, we have a similar identity that relates the von Neumann entropy to a quantum Fisher information. To state this equation, we first need to define the \emph{quantum heat semigroup}. To this end, define the Lindbladian $\Lc$ acting on $m$-mode bosonic quantum states by 
$$\Lc(\rho) =  \sum_{j=1}^m \big[ \ac_j^\dagger , [\ac_j, \rho ]\big].$$ Then, the heat semigroup $\{\Phi_t:\, t\geq 0\}$ is given by \(\Phi_t(\rho) \coloneqq e^{-t \Lc}(\rho)\) and consists of completely positive and trace-preserving (CPTP) maps.

The action of the quantum heat semigroup is easily understood by looking at characteristic functions. 
It is not hard to verify that~\cite{KS14}
$$\chi_{\Phi_t(\rho)}(z) = e^{-t |z|^2} \chi_\rho(z),$$
where for $z\in \mathbb C^m$ we use $|z|^2 = \sum_{j=1}^m |z_j|^2$. Using this equation, the action of the quantum heat semigroup can also be expressed in terms of quantum-classical convolution. Let $Z$ be a $\C^{m}$-valued centered Gaussian random vector with covariance matrix $2 I_{2m}$. Then, using~\eqref{eq:char-q-c-conv} we find that 
$$\Phi_t(\rho) = \rho \star_t Z.$$

With the quantum heat semigroup, we can derive the quantum version of de Bruijn identity, as we have
\begin{align}
\frac{\dd}{\dd t} S\big(\Phi_t(\rho)\big) \bigg|_{t = 0} &= -\frac{\dd}{\dd t} \tr \big( \Phi_t(\rho) \log \Phi_t(\rho) \big) \bigg|_{t = 0} \nonumber  \\
& = \tr \big( \Lc(\rho) \log \rho\big) \nonumber \\
& = \tr \big( \rho \Lc(\log \rho)\big) \label{eq:first_KMB},
\end{align}
where in the last line we use the fact that $\Lc$ is a self-adjoint superoperator with respect to the Hilbert--Schmidt inner product. The last term in~\eqref{eq:first_KMB} is called the \emph{Kubo--Mori--Bogoliubov (KMB) quantum Fisher information}. Overall, we derive 
\begin{equation}\label{eq:deBruijn}
\frac{\dd}{\dd t} S\big(\Phi_t(\rho)\big) \bigg|_{t = 0} =  I_\KMB (\rho), \quad\quad  I_\KMB (\rho) = \sum_{j=1}^m \tr \Big( \rho \big[ \ac_j^\dagger , [\ac_j, \log \rho ]\big] \Big).
\end{equation}
Equation~\eqref{eq:deBruijn} is called the \emph{quantum de Bruijn identity}. Analogously to the classical setting, this identity is used to reduce inequalities on von Neumann entropy to inequalities on KMB quantum Fisher information. The reduction is based on the following lemma characterizing the asymptotic behavior of entropy under the quantum heat semigroup.

\begin{lemma}\cite[Corollary III.4]{KS14} \label{lem:entropy-heat-semigroup}
Let $\rho$ be an $m$-mode quantum state with finite second-order moments. Also, let $Z$ be a Gaussian random variable with covariance matrix $2bI_{2m}$. Then, we have
$$\big|S(\rho\star_t Z) - m(c + \log b+ \log t)\big|=O\Big(\frac{1}{t}\Big),$$
where $c$ is a universal constant independent of $\rho$, $b$ and $t$. 
\end{lemma}

\section{Technical tools}~\label{Sec:Methods}

This section is dedicated to developing the tools required to prove a generalization of Theorem~\ref{Th:Shannon_Stam_KMB-intro}. In the first part, we introduce the KMB inner product and the KMB score operator, along with an integral representation of the KMB inner product. In the second part, we generalize the symmetric lifting map proposed in~\cite{beigi2023towards} to accommodate classical registers in addition to quantum registers. Finally, in the third part, we generalize the decomposition of tensor product Hilbert spaces introduced in~\cite{madiman2007generalized} to the quantum case.

\subsection{Fisher information and the score operator}\label{Sec:KMBInner}

In the classical setting, proofs of EPIs such as~\eqref{eq:generalized_classical_Shannon} are often based on the linear algebraic properties of the score function. Generalizing this idea, we would like to express the KMB~Fisher information $I_\KMB(\rho)$ as the squared norm of some score operator. To this end, we also need to introduce the KMB inner product.

For any $m$-mode bosonic quantum state $\rho$, and operators $T, R$ acting on $\mathcal{H}_m$, we introduce the KMB inner product\footnote{We note that $\langle \cdot, \cdot\rangle_{\rho,\KMB}$ is really an inner product if $\rho$ is faithful, yet if $\rho$ is not faithful we can consider it as an inner product on an appropriate subspace of operators.} as 
\begin{align} \label{eq:KMB-inner-nr}
\langle T,R\rangle_{\rho, \KMB} := \tr\big(T^{\dagger} \pi_\rho^{\psi_\KMB}(R)\big) = \tr\big(\pi_\rho^{\psi_\KMB}(T)^{\dagger} R\big),
\end{align}
where $\psi_\KMB(x,y) = \frac{x-y}{\log x - \log y}$ if $x\neq y$ and $\psi_{\text{KMB}}(x,x)=\frac{1}{x}$. Moreover, for any continuous function $f(x, y)$ the superoperator $\pi_\rho^{f}$ is defined as~\cite{bardet2017estimating} 
\begin{align}\label{eq:pi-rho-f} 
\pi_\rho^{f} := f(\mathcal{M}_{\ell, \rho}, \mathcal{M}_{r, \rho}),
\end{align}
with $\mathcal{M}_{\ell, \rho}, \mathcal{M}_{r, \rho}$ being the left and right multiplications by $\rho$, respectively, i.e., $\mathcal{M}_{\ell, \rho}(T)=\rho T$ and $\mathcal{M}_{r, \rho}(T)=T\rho$. We note that since $\psi_{\KMB}$ is symmetric in the sense that $\psi_{\KMB}(x, y) = \psi_{\KMB}(y, x)$, the superoperator $\pi_\rho^{\psi_\KMB}$ is self-adjoint with respect to the Hilbert--Schmidt inner product and~\eqref{eq:KMB-inner-nr} holds.

Now to derive an equivalent expression for the KMB~Fisher information, we first note that (see, e.g.,~\cite{bardet2017estimating})
$$[\ac_j, \log \rho] = \pi_\rho^{\phi_{\KMB}} \big([\ac_j, \rho]\big),$$
where 
$$\phi_{\KMB}(x, y) = \frac{1}{\psi_{\KMB}(x, y)} = \frac{\log x - \log y}{x-y}.$$
This equation motivates the definition of the \emph{KMB score operator} as
$$S_{\rho, j}^{\KMB} := \pi_\rho^{\phi_{\KMB}} ([\ac_j, \rho])=[\ac_j, \log \rho].$$
On the other hand, $\pi_\rho^{\psi_\KMB}\circ \pi_\rho^{\phi_\KMB}= \pi_\rho^{\psi_\KMB\cdot \phi_\KMB} =\pi_\rho^1 $ is the identity superoperator, which implies $[\ac_j, \rho] = \pi_\rho^{\psi_\KMB}\big(S_{\rho, j}^{\KMB}\big)$. Putting these together, we find that
\begin{align*}
    \tr \Big( \rho \big[ \ac_j^\dagger , [\ac_j, \log \rho ]\big] \Big) & = \tr \Big( [\ac_j, \rho]^\dagger\cdot   [\ac_j, \log \rho ]\Big) = \tr \Big( [\ac_j, \rho]^\dagger\cdot  S_{\rho, j}^{\KMB}\Big) = \Big\| S_{\rho, j}^\KMB \Big\|_{\rho, \KMB}^2,
\end{align*}
where $\| T \|_{\rho, \KMB}^2 \coloneqq \langle T,T\rangle_{\rho, \KMB}$. Then, summing over $j$, the KMB~Fisher information defined in~\eqref{eq:deBruijn}, can equivalently be written as\footnote{As shown in~\cite{beigi2023towards} the score operator $S_{\rho, j}^\KMB$ and $\| S_{\rho, j}^\KMB \|_{\rho, \KMB}^2$ are well-defined even if $\rho$ is not faithful.} 
\begin{align}\label{eq:KMB-norm-0}
	I_\KMB (\rho) = \sum_{j=1}^m \Big\| S_{\rho, j}^\KMB \Big\|_{\rho, \KMB}^2.
\end{align}

\medskip

The KMB inner product can be challenging to work with, especially when we are interested in applying algebraic methods. A significant problem with this inner product is that, unlike in the classical setting, it is not linear with respect to $\rho$.\footnote{Roughly speaking, in the classical case, only the values of $\psi_{\KMB}(x,y)$ at points $x=y$ matter, and taking the limit of $y\to x$ we find that $\psi_{\KMB}(x, x) = x$. Then, $\pi_\rho^{\KMB}(R) = \rho R$ and this is why the inner product $\langle \cdot, \cdot\rangle_{\rho, \KMB}$ is linear in $\rho$ in the commutative case.} However, in contrast to the classical context, there are plenty of options for defining Fisher information in the quantum or non-commutative settings~\cite{LesniewskiRuskai99}. Therefore, we can explore alternative quantum Fisher information metrics to work with. 

According to~\cite[Proposition 2.1]{hiai2016contraction}, there exists a unique probability measure $\omega$ on $[0,1]$ such that 
\begin{align}\label{eq:phi-kmb-int-00}
\phi_\KMB (x,y) = \int_{0}^1 \Big( \frac{1}{x+ty} + \frac{1}{tx+y}\Big) \frac{t+1}{2} \dd \omega(t).
\end{align}
In fact, it can be verified by direct integration that this equation holds for $\dd \omega(t) = \frac{2}{(1+t)^2}\dd t$.
Motivated by this equation,
for any $t \in [0,1]$, we define 
\begin{align}\label{eq:def-psi-kt-x,y}
\psi_{1,t}(x, y) = \frac{x+ty}{1+t},\qquad  \psi_{2,t}(x,y) = \frac{tx+y}{1+t},
\end{align}
and let 
$$\phi_{k,t}(x, y) = \frac{1}{\psi_{k, t}(x, y)}, \qquad  k=1, 2.$$
Then, by~\eqref{eq:phi-kmb-int-00} we have
\begin{equation}\label{KMB-Simple}
\pi_\rho^{\phi_\KMB} = \frac 12 \int_{0}^1 \big( \pi_\rho^{\phi_{1,t}} + \pi_\rho^{\phi_{2,t}} \big) \dd \omega(t).
\end{equation}
We can also define inner products using functions $\psi_{1, t}(x, y)$ and $\psi_{2,t}(x, y)$ as
\begin{equation}\label{eq:LinearInner1}
    \langle T,R\rangle_{\rho, 1, t} := \tr\big(  \pi_\rho^{\psi_{1,t}}(T)^{\dagger} R\big) 
    = \frac{1}{1+t}\Big(\tr\big( T^\dagger \rho R  \big) + t\,\tr\big( \rho T^\dagger  R  \big) \Big),
\end{equation}
and
\begin{equation}\label{eq:LinearInner2}
    \langle T, R\rangle_{\rho,2, t} := \tr\big( \pi_\rho^{\psi_{2, t}}(T)^{\dagger} R\big) 
= \frac{1}{1+t}\Big(t\,\tr\big(  T^\dagger \rho  R  \big) + \tr\big( \rho T^\dagger  R  \big) \Big).
\end{equation}
The main interesting property of these inner products is that they are linear with respect to $\rho$, on the contrary to the KMB inner product, making them much easier to work with. 

Continuing the above framework, we can also define the score operators with respect to $\phi_{1,t}, \phi_{2,t}$ as
$$ S_{\rho, j}^{1,t} = \pi_\rho^{\phi_{1, t}} ([\ac_j, \rho]), \quad\quad  S_{\rho, j}^{2,t} = \pi_\rho^{\phi_{2, t}} ([\ac_j, \rho]).$$ 
Then, using the integral representation~\eqref{KMB-Simple}, we write
\begin{align}
I_\KMB(\rho) &= 
\sum_{j=1}^{m} \tr \Big( \pi_\rho^{\phi}\big([\ac_j, \rho]\big)^\dagger [\ac_j, \rho]\Big) \nonumber \\
&= \frac 12 \sum_{j=1}^{m} \int_{0}^1 \bigg( \tr \Big( \pi_\rho^{\phi_{1, t}}\big([\ac_j, \rho]\big)^\dagger [\ac_j, \rho] \Big) +  \tr \Big( \pi_\rho^{\phi_{2, t}}\big([\ac_j, \rho]\big)^\dagger [\ac_j, \rho] \Big) \bigg) \dd \omega(t) \nonumber \\
&= \frac 12 \sum_{j=1}^{m} \int_{0}^1 \Big(\, \Big\| S_{\rho, j}^{1,t} \Big\|_{\rho,1, t}^2 + \Big\| S_{\rho, j}^{2,t} \Big\|_{\rho,2, t}^2 \Big) \dd \omega(t) \label{eq:HiaRuskai}.
\end{align}
Equation~\eqref{eq:HiaRuskai} allows us to work with the KMB~Fisher information by looking at the linear inner products $\langle \cdot ,\cdot\rangle_{\rho,1, t}$ and $\langle \cdot, \cdot\rangle_{\rho,2, t}$.

We note that $I_\KMB(\rho) $ is not necessarily finite. However, by the following lemma $\big\| S_{\rho, j}^{k,t} \big\|_{\rho,k, t}$ is finite for any $t\in (0,1]$ and $k=1, 2$.

\begin{lemma}\label{lem:Fisher-finite}
Let $\rho$ be an $m$-mode quantum state with finite second-order moments. Then $\big\| S_{\rho, j}^{k,t} \big\|_{\rho,k, t}$ is finite for any $t\in (0,1]$, $k=1, 2$ and $1\leq j\leq m$.
\end{lemma}

\begin{proof}
We prove the lemma only for $k=1$ as the case of $k=2$ is similar. To this end, we take the same approach as in the proof of \cite[Lemma 4]{beigi2023towards}. We first compute
\begin{align*}
\Big\| S_{\rho, j}^{1,t} \Big\|_{\rho,1, t}^2 & = \tr \Big( [\ac_j, \rho]^\dagger\cdot \pi_\rho^{\phi_{1, t}} ([\ac_j, \rho])  \Big)\\
& = \tr \Big( [\rho, \ac_j^\dagger ] \cdot \pi_\rho^{\phi_{1, t}} ([\ac_j, \rho])\Big) \\
& = (1+t) \tr \Big( \pi_\rho^{x-y} (\ac_j^\dagger) \cdot \pi_\rho^{\frac{1}{x+ty}} \circ \pi_\rho^{y-x}  (\ac_j)\Big) \\
& = (1+t) \tr \Big( \ac_j^\dagger \cdot \pi_\rho^{y-x} \circ \pi_\rho^{\frac{1}{x+ty}} \circ \pi_\rho^{y-x}  (\ac_j)\Big) \\
& = (1+t) \tr \Big( \ac_j^\dagger \cdot  \pi_\rho^{\frac{(x-y)^2}{x+ty}}  (\ac_j)\Big).
\end{align*}
Here, we use $\pi_\rho^{f}\circ \pi_\rho^{g} = \pi_\rho^{fg}$ which is immediate from the definition~\eqref{eq:pi-rho-f}. Next, we use~\cite[Lemma 4.1]{bardet2017estimating} and $\frac{(x-y)^2}{x+ty} \leq x+\frac 1t y$ for $x, y\geq 0$ to conclude that 
\begin{align*}
\Big\| S_{\rho, j}^{1,t} \Big\|_{\rho,1, t}^2 & \leq (1+t) \tr \Big( \ac_j^\dagger \cdot  \pi_\rho^{x+\frac 1t y}  (\ac_j)\Big) = (1+t)\Big( \tr ( \ac_j^\dagger   \rho \ac_j) +\frac 1t \tr (     \ac_j\rho\ac_j^\dagger)\Big).
\end{align*}
This gives $\big\| S_{\rho, j}^{1,t} \big\|_{\rho,1, t}^2<+\infty$ provided that both $\tr ( \ac_j^\dagger   \rho \ac_j)$ and $\tr (     \ac_j\rho\ac_j^\dagger)$ are finite, and $t >0$. 
\end{proof}

We note that this lemma and~\eqref{eq:HiaRuskai} do not necessarily imply that $I_\KMB(\rho)$ is finite since the above lemma does not give a uniform bound on $\big\| S_{\rho, j}^{k,t} \big\|_{\rho,k, t}^2$. In fact, the bound on $\big\| S_{\rho, j}^{1,t} \big\|_{\rho,k, t}^2$ takes the form $O(t^{-1})$ whose integral against $\dd \omega(t)$ is not finite. Nevertheless, as will see later, finiteness of $I_\KMB(\rho)$ is not really necessary in our arguments.

In the following, we often drop the indices $k, t$ and use the notations $\langle \cdot, \cdot\rangle_{\rho}$ and $S_{\rho, j}$ to refer to any of the linear inner products $\langle \cdot, \cdot\rangle_{\rho, k, t}$ and their corresponding score operator $S_{\rho, j}^{k,t}$, for $k=1, 2$, $t\in (0,1]$, respectively.

\subsection{Generalized symmetric lifting map}\label{Sec:LiftingMap}
In this section, we introduce the generalized symmetric lifting map first presented in~\cite{beigi2023towards}. The primary motivation behind this map is to extend the definition of symmetric functions of the form $ \widetilde g(x_1, \dots, x_n)=g\big( \frac{x_1 + \cdots + x_n}{\sqrt{n}}\big)$ associated with a given function $g(\cdot)$, to the quantum case. The symmetric lifting map introduced in~\cite{beigi2023towards} is used to relate the score operator of the state $ \rho^{\boxplus [n]} $ to the score operators of the individual $\rho_k$'s. In this paper, we further generalize the notion of the symmetric lifting map for arbitrary subsets $ \ve \subseteq [n]$.

In the following, we assume that $\rho_1, \dots, \rho_n$ are $m$-mode bosonic states and $X_1, \dots, X_{n'}$ are independent $\C^m$-valued random variables. We use the notations $\rho^{\otimes [n]} = \rho_1\otimes \cdots \otimes \rho_n$
and $X^{[n']}=(X_1, \dots, X_{n'})$ for simplicity.

For a subset $\ve \subseteq [n]$ and $z\in \C^m$, let $\cW_\ve(z)$  be the unitary operator acting on $\mathcal{H}_m^{\otimes n}$ given by 
$$ \cW_\ve(z) = \Big(\bigotimes_{k\in \ve}  D_{\frac{z}{\sqrt{|\ve|}}}\Big)\bigotimes \Big(\bigotimes_{k\in \ve^c}I \Big),$$
where as before $\mathcal{H}_m = L^2(\R^m)$ is the Hilbert space corresponding to an $m$-mode bosonic quantum system.
That is, $\cW_{\ve}(z)$ acts as $D_{z/\sqrt{|\ve|}}$ on subsystems with indices in $\ve$, and as identity elsewhere. Also, for a pair of subsets $\ve\subseteq [n], \w \subseteq [n^\prime]$, let $\cF_{\ve, \w}:\mathbb{C}^{m}\times (\C^{m})^{n^\prime}\to \mathbb{C}$ be the function given by 
$$ \cF_{\ve, \w}(z, x_1, \dots ,x_{n^\prime}) = \prod_{\ell\in \w}  D_c\bigg(\frac{z}{\sqrt{|\ve|}} , x_\ell\bigg),$$where $D_c(.,.)$ is defined in~\eqref{eq:chi-classic-D}.
Now, for an $m$-mode trace class operator $T$, we define its $(\ve,\w)$-symmetric lifting $\widetilde T_{\ve, \w}$, as a function from $(\C^{m})^{{n^\prime}}$ to bounded operators acting on $\mathcal{H}_m^{\otimes n} $ by\footnote{Hereafter, abusing the notation, we treat $[n]$ and $[n']$ as disjoint subsets indexing quantum and classical registers, respectively. Thus, subsets $\ve\subseteq [n]$ and $ \w\subseteq[n']$ are also disjoint.}
\[
	\widetilde T_{\ve, \w} (x_1, \dots, x_{n^\prime}) :  = \frac{1}{\pi^m} \int \chi_T(z)  \cF_{\ve, \w}(-z, x_1, \dots ,x_{n^\prime}) \cW_\ve(-z) \dd^{2m} z.
\]
We note that, for any $(x_1, \dots, x_{n'})$ the map $D_z \mapsto \cF_{\ve, \w}(z, x_1, \dots ,x_{n^\prime}) \cW_{\ve}(z)$ is a representation of the Weyl--Heisenberg group. Thus, using the Stone--von Neumann theorem~\cite{Hall-Book}, the operator $\widetilde T_{\ve, \w}(x_1, \dots, x_{n'})$ can be defined even if $T$ is not trace class. For more details we refer to~\cite{beigi2023towards}.

We now consider the space of measurable maps from $(\C^{m})^{{n^\prime}}$ to operators acting on $\mathcal{H}_m^{\otimes n}$ and define an inner product on this space. Letting $A, B$ be two elements in this space, we define 
\[
	\big\langle A , B \big\rangle_{\rho^{\otimes [n]}, X^{[n']}} = \mathbb{E}_{X^{[n']}} \Big[ \Big\langle A(x_1, \dots, x_{n^\prime}), B(x_1, \dots, x_{n^\prime}) \Big\rangle_{\rho^{\otimes [n]}} \Big],
\]
Here, $\langle \cdot , \cdot \rangle_{\rho^{\otimes [n]}}$ denotes any of the linear inner products defined in Subsection~\ref{Sec:KMBInner}. We, of course, assume that $A, B$ are such that the above expectation is meaningful and finite. Letting  $\|\cdot\|_{\rho^{\otimes[ n]}, X^{[n']}}$ be the norm induced by the above inner product,
we denote by $\mathfrak A_{(n, n')}=\mathfrak A_{\rho^{\otimes [n]}, X^{[n']}}$ the space of measurable maps $A$ from $(\C^{m})^{ {n^\prime}}$ to operators acting on $\mathcal{H}_m^{\otimes n}$ satisfying $\|A\|_{\rho^{\otimes [n]}, X^{[n']}} < +\infty$. We observe that $\mathfrak A_{(n, n')}$ is a Hilbert space.

To get an intuition about the symmetric lifting map, let us consider the special case where $n'=0$. In this case, the symmetric lifting map for $\ve=[n]$ is written as
\[
	\widetilde T_{[n]}   = \frac{1}{\pi^m} \int \chi_T(z)  \bigotimes_{k\in [n]}  D_{-\frac{z}{\sqrt n}} \dd^{2m} z ,
\]
and matches the one introduced in~\cite{beigi2023towards}. As shown in~\cite[Proposition 3]{beigi2023towards} we have $\tr\big(\rho^{\otimes [n]}  \widetilde T_{[n]} \big) = \tr\big(\rho^{\boxplus [n]}   T \big)$. Writing this equation with respect to the Hilbert--Schmidt inner product, we find that 
$$\big\langle\rho^{\otimes [n]} , \widetilde T_{[n]} \big\rangle = \big\langle \rho^{\boxplus [n]} ,  T\big\rangle.$$
This means that the symmetric lifting $T\mapsto \widetilde T_{[n]}$ is the adjoint of the convolution map $\rho^{\otimes [n]}\mapsto \rho^{\boxplus [n]}$ with respect to the Hilbert--Schmidt inner product. 
The following proposition is a far more generalization of this observation.

\begin{proposition}\label{Pr:InnerProduct_Symmetric}
	Let $\rho_1, \dots , \rho_{n}$ be $m$-mode bosonic quantum states with finite second moments, and let $X_1, \dots, X_{n^\prime}$ be $\C^{m}$-valued independent random vectors with finite variance. Let $\varnothing\neq \ve \subseteq [n]$ and $\w \subseteq [{n^\prime}]$ be arbitrary subsets. Then, for $m$-mode operators $T, R$ satisfying $\|T \|_{\rho^{\boxplus [n]} \star_{\frac{n'}n} X^{\boxplus [{n^\prime}]}}<+\infty$ and $\|R \|_{\rho^{\boxplus [n]} \star_{\frac{n'}n} X^{\boxplus [{n^\prime}]}}<+\infty$ norm, we have
\begin{align*}
\Big\langle \widetilde T_{\ve, \w},  \widetilde R_{[n], [{n^\prime}]} \Big\rangle_{\rho^{\otimes [n]}, X^{[n']}}
 = \tr \Big[ \big( \pi_{\rho^{\boxplus \ve} \star_{\frac{|\w|}{|\ve|}} X^{\boxplus \w}}^{\psi} (T)^\dagger \boxplus_{\frac{|\ve|}{n}} (\rho^{\boxplus {\ve^c}} \star_{\frac{|\w^c|}{|\ve^c|}} X^{\boxplus \w^c}) \big) R \Big].
\end{align*}
Here, $\psi$ is any of the functions $\psi_{k, t}$ for $k=1, 2$ and $t\in [0 ,1]$ defined in~\eqref{eq:def-psi-kt-x,y} based on which the inner product on the left hand side is defined. 
Moreover, $\ve^c$ is the complement of $\ve$ in $[n]$ and $\w^c$ is the complement of $\w$ in $[n']$.
\end{proposition}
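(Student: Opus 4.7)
The plan is to reduce the identity to a matching of characteristic-function integrals. The inner product $\langle\cdot,\cdot\rangle_{\rho,k,t}$ is linear in $\rho$ since $\pi^{\psi_{1,t}}_\rho(T)=(\rho T+tT\rho)/(1+t)$, so both sides of the claimed identity decompose into two convex-combination pieces. Focusing on $\psi=\psi_{1,t}$ (the $\psi_{2,t}$ case is analogous) and using that $\rho^{\otimes[\q]}$ is Hermitian, the LHS equals $\frac{1}{1+t}$ times
\[
\mathbb{E}_{X^{[n']}}\tr\big(\widetilde{T}_{\s,\w}^\dagger\,\rho^{\otimes[\q]}\,\widetilde{R}_{[\q],[n']}\big)+t\,\mathbb{E}_{X^{[n']}}\tr\big(\rho^{\otimes[\q]}\,\widetilde{T}_{\s,\w}^\dagger\,\widetilde{R}_{[\q],[n']}\big),
\]
while the RHS, using $\pi^{\psi_{1,t}}_\sigma(T)^\dagger=(T^\dagger\sigma+t\sigma T^\dagger)/(1+t)$ with $\sigma=\rho^{\boxplus\s}\star X^{\boxplus\w}$ and $\eta=|\s|/\q$, splits into $\frac{1}{1+t}\big(\tr((T^\dagger\sigma\boxplus_\eta\tau)R)+t\,\tr((\sigma T^\dagger\boxplus_\eta\tau)R)\big)$ with $\tau=\rho^{\boxplus\s^c}\star X^{\boxplus\w^c}$. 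It thus suffices to prove the two resulting identities term by term; we focus on the first, as the second follows by the same argument.

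Substituting the defining integrals of the lifted operators (and noting that $\widetilde{T}^\dagger_{\s,\w}=\widetilde{T^\dagger}_{\s,\w}$, which follows from a change of variables $z\to -z$ and the fact that both $D_\cc(z,x_\ell)$ and $D_z$ are unitary) expresses this LHS as a double integral over Fourier variables $z,w\in\C^m$ of $\chi_{T^\dagger}(z)\chi_R(w)$ multiplied by a trace against $\rho^{\otimes[\q]}$ and an expectation against $X^{[n']}$. The trace factorizes over subsystems: by the product rule~\eqref{eq:productRule}, each $k\in\s$ contributes $e^{\frac{1}{2\sqrt{\q|\s|}}(w^\top\bar z-\bar w^\top z)}\chi_{\rho_k}(-z/\sqrt{|\s|}-w/\sqrt{\q})$, whereas each $k\in\s^c$ contributes $\chi_{\rho_k}(-w/\sqrt{\q})$; accumulated over the $|\s|$ relevant subsystems, the phases assemble into the single factor $e^{\frac{\sqrt{\eta}}{2}(w^\top\bar z-\bar w^\top z)}$. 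Since $D_\cc(\cdot,x_\ell)$ is a character in its first argument, the classical expectation likewise factorizes as $\prod_{\ell\in\w}\chi_{X_\ell}(-z/\sqrt{|\w|}-w/\sqrt{n'})\prod_{\ell\in\w^c}\chi_{X_\ell}(-w/\sqrt{n'})$. The hypothesis $|\w|/|\s|=n'/\q$, which also implies $|\w^c|/|\s^c|=n'/\q$, is precisely what is needed to recognize the combined quantum and classical products as $\chi_\sigma(-z-\sqrt{\eta}\,w)\,\chi_\tau(-\sqrt{1-\eta}\,w)$, via $\chi_\sigma(u)=\prod_{k\in\s}\chi_{\rho_k}(u/\sqrt{|\s|})\prod_{\ell\in\w}\chi_{X_\ell}(u/\sqrt{|\w|})$ and its analogue for $\tau$.

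On the right-hand side, we compute $\tr((T^\dagger\sigma\boxplus_\eta\tau)R)$ using the inverse formula~\eqref{eq:inverse_Char} together with the convolution rule~\eqref{eq:convolution-char}, and apply the product rule once more to obtain
\[
\chi_{T^\dagger\sigma}(-\sqrt{\eta}\,w)=\tfrac{1}{\pi^m}\int\chi_{T^\dagger}(z)\,\chi_\sigma(-z-\sqrt{\eta}\,w)\,e^{\frac{\sqrt{\eta}}{2}(w^\top\bar z-\bar w^\top z)}\,\dd^{2m}z.
\]
This produces exactly the same double integral as the LHS, proving the first identity; the second follows by expanding $\chi_{\sigma T^\dagger}(-\sqrt{\eta}\,w)$ in its place. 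The principal obstacle throughout is phase bookkeeping: the many phases generated by~\eqref{eq:productRule} on both sides must be tracked carefully and shown to coincide, which ultimately relies on the elementary symmetry $a^\top\bar b=\bar b^\top a$ for $a,b\in\C^m$ and on the matching of the scalings $\sqrt{|\s|/\q}$ and $\sqrt{|\w|/n'}$ enforced by the hypothesis.
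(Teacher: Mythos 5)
Your proposal is correct and follows essentially the same route as the paper's proof: reduce by linearity of the inner product in $\rho$ to the two extremal pieces, expand both sides as double Fourier integrals of characteristic functions via the displacement-operator product rule, and match the resulting integrands (including the phase $e^{\frac{\sqrt{\eta}}{2}(w^\top\bar z-\bar w^\top z)}$ and the scalings enforced by $|\w|/|\s|=n'/\q$). The only difference is cosmetic: the paper carries out the computation on the concrete instance $\q=n'=3$, $\s=\{1,2\}$, $\w=\{2,3\}$, whereas you keep general notation.
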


Letting $\ve=[n]$ and $\w=[n']$, the above proposition implies that
$$\Big\langle \widetilde T_{[n], [n']},  \widetilde R_{[n], [{n^\prime}]} \Big\rangle_{\rho^{\otimes [n]}, X^{[n']}} = \big\langle  T,  R \big\rangle_{\rho^{\boxplus [n]}\star_{\frac{n'}{n}} X^{\boxplus[n']}}.$$
This means that $T\mapsto \widetilde T_{[n], [n']}$, and in general $T\mapsto \widetilde T_{\ve, \w}$ for any $\ve, \w$, are isometries.

\begin{proof}
As shown in Appendix~\ref{app:trace-dense}, trace class operators are dense in the linear space of operators equipped with norm $\|\cdot \|_{\rho^{\boxplus [n]} \star X^{\boxplus [{n^\prime}]}}$. Thus, we may assume that $R, T$ are trace class. Also, by linearity it suffices to prove this equation when $\psi(x, y)$ is either $\psi_{1, 0}(x, y)=x$ or $\psi_{2, 0}(x, y)=y$ as in general any $\psi_{k, t}(x, y)$ is a linear combination of these two. For simplicity and clarity of presentation, we assume that $\psi=\psi_{1, 0}$. The same argument works in general. 
Starting with the left hand side, we can write
\begin{align*}
&\Big\langle \widetilde T_{\ve, \w},  \widetilde R_{[n], [{n^\prime}]} \Big\rangle_{\rho^{\otimes [n]}, X^{[n']}} \\
& = \mathbb{E}_{X^{[n^\prime]}} \bigg[ \Big\langle \widetilde T_{\ve, \w} (x_1, \dots, x_{n^\prime}), \widetilde R_{[n], [{n^\prime}]} (x_1, \dots, x_{n^\prime}) \Big\rangle_{\rho^{\otimes [n]}} \bigg] \\
&= \mathbb{E}_{X^{[n^\prime]}} \bigg[ \frac{1}{\pi^{2m}} \int \overline{\chi_T(z)}\,  \chi_R(z^\prime)\, \overline{\cF_{\ve, \w}(-z, x_1, \dots ,x_{n^\prime}) }\,\cF_{[n], [{n^\prime}]}(-z^\prime, x_1, \dots ,x_{n^\prime}) \\ & \qquad \qquad \qquad \times \Big\langle \cW_\ve(-z), \cW_{[n]}(-z^\prime) \Big\rangle_{\rho^{\otimes [n]}} \dd^{2m} z \dd^{2m} z^\prime \bigg]\\
&= \frac{1}{\pi^{2m}} \int \overline{\chi_T(z)} \, \chi_R(z^\prime) \, \mathbb{E}_{X^{[n^\prime]}} \Big[ \overline{\cF_{\ve, \w}(-z, x_1, \dots ,x_{n^\prime})} \, \cF_{[n], [{n^\prime}]}(-z^\prime, x_1, \dots ,x_{n^\prime})\Big] \\ 
& \qquad \qquad \qquad \times \Big\langle \cW_\ve(-z), \cW_{[n]}(-z^\prime) \Big\rangle_{\rho^{\otimes [n]}} \dd^{2m} z \dd^{2m} z^\prime .
\end{align*}
For the last factor, we compute

\begin{align*}
	\Big\langle \cW_\ve(-z), \cW_{[n]}(-z^\prime) \Big\rangle_{\rho^{\otimes [n]}} & =  \prod_{k\in \ve} \Big\langle D\Big(-\frac{z}{\sqrt{|\ve|}} \Big), D\Big(-\frac{z^\prime }{\sqrt n} \Big)\Big\rangle_{\rho_k} \cdot \prod_{k\in \ve^c} \Big\langle I, D\Big(-\frac{z^\prime }{\sqrt n} \Big)\Big\rangle_{\rho_k} \\
	& = \prod_{k\in \ve} \tr\Big[ \rho_k   D\Big(-\frac{z^\prime }{\sqrt n} \Big) D\Big(\frac{z}{\sqrt{|\ve|}} \Big)\Big] \cdot \prod_{k\in \ve^c} \tr\Big[ \rho_k   D\Big(-\frac{z^\prime }{\sqrt n} \Big)  \Big]\\
	& =  e^{i\theta_{z, z^\prime}} \prod_{k\in \ve} \tr\Big[ \rho_k   D\Big(\frac{z}{\sqrt{|\ve|}}-\frac{z^\prime }{\sqrt n} \Big) \Big] \cdot \prod_{k\in \ve^c} \tr\Big[ \rho_k   D\Big(-\frac{z^\prime }{\sqrt n} \Big)  \Big]\\
	& =  e^{i\theta_{z, z^\prime}} \prod_{k\in \ve} \chi_{\rho_k}   \Big(\frac{z}{\sqrt{|\ve|}}-\frac{z^\prime }{\sqrt n} \Big) \cdot \prod_{k\in \ve^c} \chi_{\rho_k} \Big(-\frac{z^\prime }{\sqrt n} \Big)  \\
	& =  e^{i\theta_{z, z^\prime}} \chi_{\rho^{\boxplus \ve}}   \Big(z-\sqrt{\frac{ |\ve|}{n}}z^\prime \Big) \cdot  \chi_{\rho^{\boxplus \ve^c}} \Big(-\sqrt{\frac{|\ve^c| }{n}}z^\prime \Big) ,
\end{align*}
where $i\theta_{z, z^\prime}=\frac12 \sqrt{\frac{|\ve|}{n}} (z^\top \bar{z}^\prime -\bar{z}^\top z^\prime)$ and in the third line we use the product rule for displacement operators~\eqref{eq:productRule}. Also, we have
\begin{align*}
&\mathbb{E}_{X^{[n^\prime]}} \Big[ \overline{\cF_{\ve, \w}(-z, x_1, \dots ,x_{n^\prime})} \, \cF_{[n], [{n^\prime}]}(-z^\prime, x_1, \dots ,x_{n^\prime})  \Big]\\
& = \prod_{\ell\in \w}  \mathbb E\Big[ D_\cc\Big(\frac{z}{\sqrt{|\ve|}} , x_\ell\Big) \, D_\cc\Big(-\frac{z^\prime }{\sqrt n}  , x_\ell \Big)\Big] \cdot \prod_{\ell \in \w^c} \mathbb E\Big[D_\cc\Big(\frac{-z^\prime}{\sqrt n} , x_\ell\Big)\Big]\\
& = \prod_{\ell\in \w}  \mathbb E\Big[ D_\cc\Big(\frac{z}{\sqrt{|\ve|}}-\frac{z^\prime }{\sqrt n} , x_\ell\Big) \Big] \cdot \prod_{\ell \in \w^c} \mathbb E\Big[D_\cc\Big(\frac{-z^\prime}{\sqrt n} , x_\ell\Big)\Big]\\
& = \prod_{\ell\in \w}  \chi_{X_\ell}\Big(\frac{z}{\sqrt{|\ve|}}-\frac{z^\prime }{\sqrt n} , x_\ell\Big) \cdot \prod_{\ell \in \w^c} \chi_{X_\ell}\Big(\frac{-z^\prime}{\sqrt n} , x_\ell\Big)\\
& = \chi_{X^{\boxplus \w}} \bigg(\sqrt{\frac{|\w|}{|\ve|}} z - \sqrt{\frac{|\w|}{n}} z^\prime \bigg)\cdot \chi_{X^{\boxplus \w^c }} \bigg(\ - \sqrt{\frac{|\w^c|}{n}} z^\prime \bigg).
\end{align*}
Putting all these together yields
\begin{align}\label{eq:prop1-left-hand-side-putting}
& \big\langle \widetilde T_{\ve, \w},  \widetilde R_{[n], [{n^\prime}]} \big\rangle_{\rho^{\otimes [n]}, X^{[n']}}\nonumber\\
& = \frac{1}{\pi^{2m}} \int \overline{\chi_T(z)} \, \chi_R(z^\prime) \, \chi_{X^{\boxplus \w}} \bigg(\sqrt{\frac{|\w|}{|\ve|}} z - \sqrt{\frac{|\w|}{n}} z^\prime \bigg)\cdot \chi_{X^{\boxplus \w^c}} \bigg(\ - \sqrt{\frac{|\w^c|}{n}} z^\prime \bigg)  \nonumber\\
&   \qquad \quad \qquad \times e^{i\theta_{z, z^\prime}} \chi_{\rho^{\boxplus \ve}}   \Big(z-\sqrt{\frac{ |\ve|}{n}}z^\prime \Big) \cdot  \chi_{\rho^{\boxplus \ve^c}} \Big(-\sqrt{\frac{|\ve^c| }{n}}z^\prime \Big)\dd^{2m} z \dd^{2m} z^\prime.
\end{align}

To handle the other side, we use linearity and~\eqref{eq:inverse_Char} to write
\begin{align}\label{eq:prop1-right-hand-side-f}
&\tr \Big[ \big( \pi_{\rho^{\boxplus \ve} \star_{\frac{|\w|}{|\ve|}} X^{\boxplus \w}}^{\psi} (T)^\dagger \boxplus_{\frac{|\ve|}{n}} (\rho^{\boxplus {\ve^c}} \star_{\frac{|\w^c|}{|\ve^c|}} X^{\boxplus \w^c}) \big) R \Big]\nonumber \\
&= \frac{1}{\pi^{2m}} \int \overline{\chi_T(z)} \, \chi_R(z^\prime) \,  \tr \bigg[ \Big( \pi_{\rho^{\boxplus \ve} \star_{\frac{|\w|}{|\ve|}} X^{\boxplus \w}}^{\psi} (D_{-z})^\dagger \boxplus_{\frac{|\ve|}{n}} \big(\rho^{\boxplus {\ve^c}} \star_{\frac{|\w^c|}{|\ve^{c}|}} X^{\boxplus \w^c}\big) \Big) D_{-z^\prime} \bigg] \dd^{2m} z \: \dd^{2m} z^\prime.
\end{align}
Also, we can write
\begin{align*}
& \tr \bigg[\Big( \pi_{\rho^{\boxplus \ve} \star_{\frac{|\w|}{|\ve|}} X^{\boxplus \w}}^{\psi} (D_{-z})^\dagger \boxplus_{\frac{|\ve|}{n}} \big(\rho^{\boxplus {\ve^c}} \star_{\frac{|\w^c|}{|\ve^{c}|}} X^{\boxplus \w^c}\big) \Big) D_{-z^\prime} \bigg]\\
&= \chi_{ \pi_{\rho^{\boxplus \ve} \star_{\frac{|\w|}{|\ve|}} X^{\boxplus \w}}^{\psi} (D_{-z})^\dagger \boxplus_{\frac{|\ve|}{n}} \big(\rho^{\boxplus {\ve^c}} \star_{\frac{|\w^c|}{|\ve^{c}|}} X^{\boxplus \w^c}\big) } (-z')\\
& = \chi_{\pi_{\rho^{\boxplus \ve} \star_{\frac{|\w|}{|\ve|}} X^{\boxplus \w}}^{\psi} (D_{-z})^\dagger} \bigg(-\sqrt{\frac{|\ve|}{n}} z'\bigg) \chi_{ \rho^{\boxplus {\ve^c}} \star_{\frac{|\w^c|}{|\ve^{c}|}} X^{\boxplus \w^c} } \Big(-\sqrt{\frac{|\ve^c|}{n}}z'\Big)\\
& = \tr\bigg[\rho^{\boxplus \ve} \star_{\frac{|\w|}{|\ve|}} X^{\boxplus \w} \cdot D_{-\sqrt{\frac{|\ve|}{n}} z'} D_{z}\bigg] \chi_{ \rho^{\boxplus {\ve^c}} \star_{\frac{|\w^c|}{|\ve^{c}|}} X^{\boxplus \w^c} } \Big(-\sqrt{\frac{|\ve^c|}{n}}z'\Big)\\
& = e^{i\theta_{z, z'}}\tr\bigg[\rho^{\boxplus \ve} \star_{\frac{|\w|}{|\ve|}} X^{\boxplus \w} \cdot  D_{z-\sqrt{\frac{|\ve|}{n}} z'} \bigg] \chi_{ \rho^{\boxplus {\ve^c}} \star_{\frac{|\w^c|}{|\ve^{c}|}} X^{\boxplus \w^c} } \Big(-\sqrt{\frac{|\ve^c|}{n}}z'\Big)\\
& = e^{i\theta_{z, z'}}\chi_{\rho^{\boxplus \ve} \star_{\frac{|\w|}{|\ve|}} X^{\boxplus \w}}  \bigg(z-\sqrt{\frac{|\ve|}{n}} z'\bigg) \chi_{ \rho^{\boxplus {\ve^c}} \star_{\frac{|\w^c|}{|\ve^{c}|}} X^{\boxplus \w^c} } \Big(-\sqrt{\frac{|\ve^c|}{n}}z'\Big)\\
& = e^{i\theta_{z, z'}}\chi_{\rho^{\boxplus \ve}}  \bigg(z-\sqrt{\frac{|\ve|}{n}} z'\bigg) \chi_{X^{\boxplus \w}}\bigg(\sqrt{\frac{|\w|}{|\ve|}}z-\sqrt{\frac{|\w|}{n}} z'\bigg)  \chi_{ \rho^{\boxplus {\ve^c}} } \Big(-\sqrt{\frac{|\ve^c|}{n}}z'\Big) \chi_{X^{\boxplus \w^c}}\bigg(-\sqrt{\frac{|\w^c|}{n}}z'\bigg).
\end{align*}
Here, we use the definition of the characteristic function, as well as~\eqref{eq:convolution-char} and~\eqref{eq:char-q-c-conv}. Also, in the fifth line we apply~\eqref{eq:productRule}. Using this in~\eqref{eq:prop1-right-hand-side-f} and comparing to~\eqref{eq:prop1-left-hand-side-putting}, we obtain the desired identity.
\end{proof}

We can now state the following important lemma which can be interpreted as the quantum generalization of~\cite[Lemma 1]{madiman2007generalized}.

\begin{lemma}\label{NiceInner}
Let $\rho_1, \dots , \rho_{n}$ be $m$-mode bosonic quantum states with finite second moments, and let $X_1, \dots, X_{n^\prime}$ be $\C^{m}$-valued independent random vectors with finite variance. Let $R$ be an operator satisfying $\| R\|_{\rho^{\boxplus [n]} \star_{\frac{n'}{n}} X^{\boxplus [{n^\prime}]}} < +\infty$. Also, for any subsets $\varnothing\neq \ve \subseteq [n]$ and $\w \subseteq [{n^\prime}]$, let $S_{\rho^{\boxplus \ve} \star_{\frac{|\w|}{|\ve|}} X^{\boxplus \w}, j}$ be the  score operator of $\rho^{\boxplus \ve} \star_{\frac{|\w|}{|\ve|}} X^{\boxplus\w}$ on the $j$-th mode with respect to a linear inner product which we fix in advance. Then, we have
\[
	\Big\langle \widetilde S_{\rho^{\boxplus \ve} \star_{\frac{|\w|}{|\ve|}} X^{\boxplus \w}, j}, \widetilde R_{[n], [{n^\prime}]} \Big\rangle_{\rho^{\otimes [n]}, X^{[n^\prime]} } = \sqrt{\frac{|\ve|}{n}} \Big\langle S_{\rho^{\boxplus [n]} \star_{\frac{n'}{n}} X^{\boxplus [{n^\prime}]}, j}, R \Big\rangle_{\rho^{\boxplus [n]} \star_{\frac{n'}{n}} X^{\boxplus [{n^\prime}]} },
\]
where $\widetilde S_{\rho^{\boxplus \ve} \star_{\frac{|\w|}{|\ve|}} X^{\boxplus \w}, j} =\widetilde{(S_{\rho^{\boxplus \ve} \star_{\frac{|\w|}{|\ve|}} X^{\boxplus \w}, j})_{\ve, \w}}$ denotes the $(\ve, \w)$-symmetric lifting of $ S_{\rho^{\boxplus \ve} \star_{\frac{|\w|}{|\ve|}} X^{\boxplus \w}, j}$.
\end{lemma}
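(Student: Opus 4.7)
The plan is to apply Proposition~\ref{Pr:InnerProduct_Symmetric} directly, taking $T = S_{\rho^{\boxplus \s} \star X^{\boxplus \w}, j}$, and then massage the resulting expression using (i) the definition of the score operator, (ii) the commutator--convolution relation of equation~\eqref{eq:Comm_Conv}, and (iii) the compatibility of classical and quantum convolutions.

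First, I would invoke Proposition~\ref{Pr:InnerProduct_Symmetric} (which applies since by hypothesis $|\w|/|\s| = n'/n$) to write
\[
\Big\langle \widetilde S_{\rho^{\boxplus \s} \star X^{\boxplus \w}, j}, \widetilde R_{[\q], [{n^\prime}]} \Big\rangle_{\rho^{\otimes [\q]}, X^{[n^\prime]} }
= \tr \Big( \big( \pi_{\rho^{\boxplus \s} \star X^{\boxplus \w}}^{\psi}\!\big(S_{\rho^{\boxplus \s} \star X^{\boxplus \w}, j}\big)^\dagger \boxplus_{\frac{|\s|}{\q}} (\rho^{\boxplus {\s^c}} \star X^{\boxplus \w^c}) \big) R \Big).
\]
Next I would simplify the inner $\pi$: by the very definition $S_{\tau, j} = \pi_\tau^{\phi}([\ac_j,\tau])$ with $\phi\cdot\psi=1$, so $\pi_\tau^{\psi}(S_{\tau,j}) = [\ac_j,\tau]$. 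Applied to $\tau = \rho^{\boxplus \s} \star X^{\boxplus \w}$, this collapses the argument into $[\ac_j, \rho^{\boxplus \s} \star X^{\boxplus \w}]^\dagger$.

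Second, I would move the dagger out of the convolution using the fact that $\rho^{\boxplus \s^c}\star X^{\boxplus \w^c}$ is self-adjoint, giving
\[
\big[\ac_j, \rho^{\boxplus \s} \star X^{\boxplus \w}\big]^\dagger \boxplus_{\frac{|\s|}{\q}} \big(\rho^{\boxplus {\s^c}} \star X^{\boxplus \w^c}\big) = \Big( \big[\ac_j, \rho^{\boxplus \s} \star X^{\boxplus \w}\big] \boxplus_{\frac{|\s|}{\q}} \big(\rho^{\boxplus {\s^c}} \star X^{\boxplus \w^c}\big) \Big)^\dagger,
\]
and then apply equation~\eqref{eq:Comm_Conv} with $\eta = |\s|/\q$. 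That relation converts the convolved commutator into $\sqrt{|\s|/\q}$ times the commutator of $\ac_j$ with the full convolution. The hypothesis $|\w|/|\s|=n'/n$ is exactly what guarantees that the two convolution parameters for the quantum and classical pieces coincide, so the compatibility formula between $\boxplus$ and $\star$ (stated in Subsection~\ref{subsec:Con}) gives
\[
\big(\rho^{\boxplus \s} \star X^{\boxplus \w}\big) \boxplus_{\frac{|\s|}{\q}} \big(\rho^{\boxplus {\s^c}} \star X^{\boxplus \w^c}\big) = \rho^{\boxplus[\q]}\star X^{\boxplus[n^\prime]}.
\]
Combining these, the expression inside the trace becomes $\sqrt{|\s|/\q}\,[\ac_j, \rho^{\boxplus[\q]}\star X^{\boxplus[n^\prime]}]^\dagger R$.

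Finally, I would reverse the score-operator identity in the other direction: since
\[
\big\langle S_{\tau, j}, R\big\rangle_{\tau} = \tr\big(\pi_\tau^{\psi}(S_{\tau,j})^\dagger R\big) = \tr\big([\ac_j,\tau]^\dagger R\big),
\]
applied to $\tau = \rho^{\boxplus[\q]}\star X^{\boxplus[n^\prime]}$, the trace we arrived at is precisely $\sqrt{|\s|/\q}\,\langle S_{\rho^{\boxplus[\q]}\star X^{\boxplus[n^\prime]},j}, R\rangle_{\rho^{\boxplus[\q]}\star X^{\boxplus[n^\prime]}}$, which is the desired right-hand side. The heaviest lifting (the characteristic-function computation) is already done in the preceding proposition, so no serious obstacle remains; the only points requiring care are tracking the dagger through the convolution and verifying that the assumption $|\w|/|\s| = n'/n$ is exactly what is needed both for invoking the proposition and for the quantum--classical convolution to recombine cleanly.
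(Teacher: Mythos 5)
Your proposal is correct and follows essentially the same route as the paper's own proof: invoke Proposition~\ref{Pr:InnerProduct_Symmetric} with $T$ the score operator, collapse $\pi^{\psi}_{\tau}(S_{\tau,j})$ to $[\ac_j,\tau]$, and then use~\eqref{eq:Comm_Conv} together with the compatibility of $\boxplus$ and $\star$ to pull out the factor $\sqrt{|\s|/\q}$. The only difference is that you spell out the handling of the dagger and the recombination of the quantum--classical convolutions, which the paper compresses into a single line.
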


Since $\rho_k$'s and $X_\ell$'s have finite second-order moments, by Lemma~\ref{lem:Fisher-finite}, $\big\| S_{\rho^{\boxplus \ve} \star_{\frac{|\w|}{|\ve|}} X^{\boxplus \w}, j} \big\|_{\rho^{\boxplus \ve} \star_{\frac{|\w|}{|\ve|}} X^{\boxplus \w}}$ is indeed finite and the above equation makes sense.

\begin{proof}
Using Proposition~\ref{Pr:InnerProduct_Symmetric} we compute
\begin{align*}
\Big\langle \widetilde S_{\rho^{\boxplus \ve} \star_{\frac{|\w|}{|\ve|}} X^{\boxplus \w}, j}, \widetilde R_{[n], [{n^\prime}]} \Big\rangle_{\rho^{\otimes [n]}, X^{[n^\prime]} } &= \tr \Big( \big( \pi_{\rho^{\boxplus \ve} \star_{\frac{|\w|}{|\ve|}} X^{\boxplus \w}}^{\psi} (S_{\rho^{\boxplus \ve} \star_{\frac{|\w|}{|\ve|}} X^{\boxplus \w}, j})^\dagger \boxplus_{\frac{|\ve|}{n}} (\rho^{\boxplus {\ve^c}} \star_{\frac{|\w^c|}{|\ve^c|}} X^{\boxplus \w^c}) \big) R \Big) \\
&= \tr \Big( \big( [\ac_j , \rho^{\boxplus \ve} \star_{\frac{|\w|}{|\ve|}} X^{\boxplus \w} ]^\dagger \boxplus_{\frac{|\ve|}n} (\rho^{\boxplus {\ve^c}} \star_{\frac{|\w^c|}{|\ve^c|}} X^{\boxplus \w^c}) \big) R \Big) \\
&= \sqrt{\frac{|\ve|}{n}} \tr \Big( [\ac_j , \rho^{\boxplus [n]} \star_{\frac{n'}{n}} X^{\boxplus [{n^\prime}]}]^\dagger R \Big) \\
&= \sqrt{\frac{|\ve|}{n}}  \Big\langle S_{\rho^{\boxplus [n]} \star_{\frac{n'}{n}} X^{\boxplus [{n^\prime}]}, j}, R \Big\rangle_{\rho^{\boxplus [n]} \star_{\frac{n'}{n}} X^{\boxplus [{n^\prime}]} } ,
\end{align*}
where in the third line we use the definition of the score operator and~\eqref{eq:Comm_Conv}.
\end{proof}

And here is an important corollary of this lemma.

\begin{corollary}\label{corollary:projection-score-square}
Fix an inner product as above and 
let $\Pi$ be the orthogonal projection on the closure of the linear space $\left\{ \widetilde R_{[n], [{n^\prime}]}:\,   \| R \|_{\rho^{\boxplus [n]} \star_{\frac{n'}{n}} X^{\boxplus [{n^\prime}]}} < +\infty   \right\}$. Then, for any subsets $\ve\subseteq [n]$ and $\w\subseteq [n']$ we have
\begin{equation}\label{eq:projection-score-square}
 \sqrt{\frac{n}{|\ve|}}  \Pi\big(\widetilde S_{\rho^{\boxplus \ve} \star_{\frac{|\w|}{|\ve|}} X^{\boxplus \w}, j} \big) =   \widetilde S_{\rho^{\boxplus [n]} \star_{\frac{n'}{n}} X^{\boxplus [{n^\prime}]}, j}.
\end{equation} 
\end{corollary}

\begin{proof}
By applying Lemma~\ref{NiceInner} twice, we obtain
\begin{align*}
	\Big\langle \widetilde S_{\rho^{\boxplus \ve} \star_{\frac{|\w|}{|\ve|}} X^{\boxplus \w}, j}, \widetilde R_{[n], [{n^\prime}]} \Big\rangle_{\rho^{\otimes [n]}, X^{[n^\prime]} } & = \sqrt{\frac{|\ve|}{n}} \Big\langle S_{\rho^{\boxplus [n]} \star_{\frac{n'}{n}} X^{\boxplus [{n^\prime}]}, j}, R \Big\rangle_{\rho^{\boxplus [n]} \star_{\frac{n'}{n}} X^{\boxplus [{n^\prime}]} }\\
	& = \sqrt{\frac{|\ve|}{n}} \Big\langle \widetilde S_{\rho^{\boxplus [n]} \star_{\frac{n'}{n}} X^{\boxplus [{n^\prime}]}, j}, \widetilde R_{[n], [n']} \Big\rangle_{\rho^{\otimes [n]}, X^{[n^\prime]} }.
\end{align*}
Since this identity holds for any $R$, the desired identity~\eqref{eq:projection-score-square} holds. 
\end{proof}

\subsection{Decomposition of tensor product spaces}\label{Sec:Decomp}
Let $\langle \cdot, \cdot \rangle_\rho$ be any of the linear inner products $\langle \cdot, \cdot\rangle_{\rho, k, t}$, for $k=1, 2$ and $t\in (0,1]$, with the corresponding functions $\psi, \phi$, defined in~\eqref{eq:LinearInner1} and~\eqref{eq:LinearInner2}. We note that for any such inner product and all operators $T_A, T_B$ and quantum states $\rho_A, \rho_B$ on subsystems $A, B$ respectively, we have 
\begin{equation}\label{eq:inner-assumptionIdentity}
\PiInner{\rho_A \otimes \rho_B}{\psi}(T_A \otimes {I}_B) = \PiInner{\rho_A}{\psi}(T_A) \otimes \rho_B, \quad \PiInner{\rho_A \otimes \rho_B}{\psi}({I}_A \otimes T_B) = \rho_A \otimes \PiInner{\rho_B}{\psi}(T_B).
\end{equation}
We note that this equation does not hold for the KMB~inner product, but because of linearity it holds for the other inner products we consider here.

Let $\rho_1, \dots, \rho_{n}$ be $m$-mode bosonic quantum states and $X_1, \dots, X_{n^\prime}$ be $\C^{m}$-valued, independent random vectors. Recall that $\mathfrak A_{(n, n')}$ is the space of maps $A$ from $(\C^{m})^{{n^\prime}}$ to operators acting on $\mathcal{H}_m^{\otimes n}$ satisfying $\|A\|_{\rho^{\otimes [n]}, X^{[n^\prime]}}<+\infty$. For any $1\leq k \leq n$, define the superoperator $\E_k : \mathfrak A_{(n, n')} \to \mathfrak A_{(n, n')}$ by
 $$\E_k A (x_1, \dots, x_{n^\prime}) := \tr_k\bigg(\rho_k A(x_1, \dots, x_{n^\prime})\bigg) \otimes {I}_k,$$ 
where $\tr_k$ stands for the partial trace over the $k$-th subsystem and $I_k$ is the identity operator acting on the $k$-th subsystem. Also, for $1\leq \ell \leq n'$ we define the superoperators $\E_\ell^c : \mathfrak A_{(n, n')} \mapsto \mathfrak A_{(n, n')}$ as
$$\E_\ell^c A  := \mathbb{E}_{X_\ell}[A].$$ 
We note that $\E_k$ and $\E_\ell^c$ are projections. Moreover, using~\eqref{eq:inner-assumptionIdentity} it can be verified that they are self-adjoint with respect to $\langle \cdot, \cdot\rangle_{\rho^{\otimes [n]}, X^{[n^\prime]}}$, and these projections mutually commute by definition. Therefore, denoting the identity superoperator by $\mathcal I$, we can write 
\begin{equation}\label{VarianceDrop}
\mathcal I =  \prod_{k=1}^{n} \big(\E_k + (\mathcal{I} - \E_k)\big) \cdot \prod_{\ell=1}^{{n^\prime}} \big(\E_\ell^c + (\mathcal{I} - \E_\ell^c) \big)   = \sum_{\ve \subseteq [n], \w \subseteq [{n^\prime}]} {\mathcal P}_{\ve,\w} ,
\end{equation}
where 
$${\mathcal P}_{\ve, \w} = \Big(\prod_{k\notin \ve} \E_k \Big) \prod_{k\in \ve} (\mathcal{I} - \E_k) \cdot \Big(\prod_{\ell\notin \w} \E_\ell^c \Big) \prod_{\ell\in \w} (\mathcal{I} - \E_\ell^c).$$ 
We observe that ${\mathcal P}_{\ve,\w}$ is an orthogonal projection for any $(\ve , \w)$, and ${\mathcal P}_{\ve,\w}{\mathcal P}_{\ve^\prime, \w^\prime}=0$ if $\ve \neq \ve^\prime$ or $\w \neq \w^\prime$. The point is that if there exists $k \in \ve$ such that $k \notin \ve^\prime$, then we have a factor $\E_k \times (\mathcal{I} - \E_k) = 0$ in ${\mathcal P}_{\ve,\w}{\mathcal P}_{\ve^\prime, \w^\prime}$. The same story goes with $\w, \w^\prime$. Thus, the images of ${\mathcal P}_{\ve, \w}$, for $\ve\subseteq [n]$ and $\w \subseteq [{n^\prime}]$, form a decomposition of the Hilbert space $\mathfrak{A}_{(n,{n^\prime})}$ corresponding to the inner product $\langle \cdot, \cdot \rangle_{\rho^{\otimes [n]}, X^{[n^\prime]}}$, into a direct sum of orthogonal subspaces.

\section{Proof of the main results}\label{Sec:proof_mainres}

In this section, we use the tools developed in the previous section to prove our main results. We first state generalizations of our main results which include both classical and quantum registers.

\begin{theorem}\label{Th:Quantum_Classical_Shannon}
Let $\rho_1, \dots, \rho_{n}$ be $m$-mode bosonic quantum states with finite second moments, and $X_1, \dots , X_{n^\prime}$ be $\C^{m}$-valued, independent random vectors with finite variance. Let $\mathcal{C}$ be a collection of pairs $(\ve,\w)$ consisting of subsets $\varnothing\neq \ve \subseteq [n]$ and $\w \subseteq [{n^\prime}]$. Let $r$ be the maximum number of times an index from $[n]$ or $[{n^\prime}]$ appears in elements in $\mathcal{C}$. 
Then, we have
\begin{equation}\label{eq:Generalized_quantum_Classical_EPI}
\exp \bigg( \frac 1m  S\big(\rho^{\boxplus [n]} \star_{\frac{n'}{n}} X^{\boxplus [{n^\prime}]}\big) \bigg) \geq \frac{1}{ rn } \sum_{(\ve, \w) \in \mathcal{C}} |\ve|  \exp \bigg( \frac 1m  S\big(\rho^{\boxplus \ve} \star_{\frac{|\w|}{\ve}} X^{\boxplus \w}\big) \bigg).
\end{equation}
\end{theorem}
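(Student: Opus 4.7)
The plan is to follow the two-stage strategy foreshadowed in Section~\ref{Sec:Intro}. First I establish the quantum--classical generalization of Theorem~\ref{Th:Shannon_Stam_KMB-intro} stating that, for any probability distribution $\mu$ on $\mathcal C$,
\begin{align*}
I_\KMB\big(\rho^{\boxplus [\q]} \star X^{\boxplus [{n^\prime}]}\big)\leq r\q\sum_{(\s,\w)\in \mathcal C}\frac{\mu_{\s,\w}^2}{|\s|}\,I_\KMB\big(\rho^{\boxplus \s}\star X^{\boxplus \w}\big).
\end{align*}
Then I apply the quantum de Bruijn identity~\eqref{eq:deBruijn} together with a Stam--Blachman sharpening argument to upgrade the Fisher-information bound to the exponential entropy inequality~\eqref{eq:Generalized_quantum_Classical_EPI}. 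The presence of the classical registers is not cosmetic: they absorb the auxiliary Gaussian smoothing that appears in the de Bruijn step, resolving the issue that the various $\rho^{\boxplus \s}\star X^{\boxplus \w}$ must be evolved coherently under the heat semigroup.

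For the Fisher-information step, I first use the integral representation~\eqref{KMB-Simple}--\eqref{eq:HiaRuskai} to reduce the inequality to the corresponding statement for each of the linear Fisher informations $\|S_{\rho,j}^{k,t}\|_{\rho,k,t}^2$, $k\in\{1,2\}$, $t\in[0,1]$. This reduction is essential because only the linear inner products obey~\eqref{eq:inner-assumptionIdentity}, and hence only they interact cleanly with the tensor-product decomposition of Subsection~\ref{Sec:Decomp}. Fixing a linear inner product and a mode $j$, I consider
\[
L_j:=\sum_{(\s,\w)\in\mathcal C}\mu_{\s,\w}\sqrt{\q/|\s|}\,\widetilde S_{\rho^{\boxplus \s}\star X^{\boxplus \w},\,j}\;\in\;\mathfrak A_{(\q,{n^\prime})}.
\]
Summing the identity of Lemma~\ref{NiceInner} weighted by $\mu_{\s,\w}\sqrt{\q/|\s|}$ and using that $\mu$ is a probability distribution gives $\langle L_j,\widetilde R_{[\q],[{n^\prime}]}\rangle=\langle S_{\rho^{\boxplus [\q]}\star X^{\boxplus [{n^\prime}]},\,j},R\rangle$ for every $R$. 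Choosing $R=S_{\rho^{\boxplus [\q]}\star X^{\boxplus [{n^\prime}]},\,j}$ and applying Cauchy--Schwarz, together with $\|\widetilde R_{[\q],[{n^\prime}]}\|^2=\|R\|^2$ (a direct consequence of Proposition~\ref{Pr:InnerProduct_Symmetric} with $\s=[\q]$, $\w=[{n^\prime}]$), yields $\|S_{\rho^{\boxplus [\q]}\star X^{\boxplus [{n^\prime}]},\,j}\|^2\leq\|L_j\|^2$.

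To bound $\|L_j\|^2$, I expand with the orthogonal decomposition $\mathcal I=\sum_{\s',\w'}\mathcal P_{\s',\w'}$ of Subsection~\ref{Sec:Decomp}. Directly from the definition of the lift, $\widetilde S_{\rho^{\boxplus \s}\star X^{\boxplus \w},\,j}$ acts as identity on modes $k\notin\s$ and is independent of $x_\ell$ for $\ell\notin\w$, so it is invariant under both $\E_k$ for $k\notin\s$ and $\E_\ell^c$ for $\ell\notin\w$; consequently $\mathcal P_{\s',\w'}\widetilde S_{\rho^{\boxplus \s}\star X^{\boxplus \w},\,j}$ vanishes unless $(\s',\w')\subseteq(\s,\w)$. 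Cauchy--Schwarz inside each block $\mathcal P_{\s',\w'}$ over only the contributing pairs, together with the bound that any nonempty $(\s',\w')$ is contained in at most $r$ pairs of $\mathcal C$ (this is exactly the definition of $r$ applied to any index in $\s'\cup\w'$), yields $\|L_j\|^2\leq r\sum_{(\s,\w)}(\mu_{\s,\w}^2\q/|\s|)\|\widetilde S_{\rho^{\boxplus \s}\star X^{\boxplus \w},\,j}\|^2$; the $(\varnothing,\varnothing)$ block contributes nothing because the global expectation of the score vanishes. A final application of Proposition~\ref{Pr:InnerProduct_Symmetric} to the reduced index sets $(\s,\w)$, combined with the factorization of the linear inner product~\eqref{eq:inner-assumptionIdentity} across subsystems $\s^c$ and marginalization across $\w^c$, identifies $\|\widetilde S_{\rho^{\boxplus \s}\star X^{\boxplus \w},\,j}\|^2$ with $\|S_{\rho^{\boxplus \s}\star X^{\boxplus \w},\,j}\|^2$. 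Summing over $j$ and integrating back through~\eqref{eq:HiaRuskai} yields the desired KMB Fisher-information inequality.

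Finally, to upgrade the Fisher-information inequality to~\eqref{eq:Generalized_quantum_Classical_EPI}, I use a Stam--Blachman scaling along the heat semigroup as in~\cite{KS14}. Setting $F_{\s,\w}(t)=\exp(S(\rho^{\boxplus \s}\star X^{\boxplus \w}\star_{\alpha_{\s,\w}(t)}Z)/m)$ for a Gaussian random vector $Z$ as in Subsection~\ref{subsec:bruijn} and carefully chosen rates $\alpha_{\s,\w}(t)$ ensuring that the evolutions of the various states are compatible, de Bruijn gives $mF_{\s,\w}'(t)/F_{\s,\w}(t)=\alpha_{\s,\w}'(t)\,I_\KMB(\rho^{\boxplus \s}\star X^{\boxplus \w}\star_{\alpha_{\s,\w}(t)}Z)$. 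Feeding this into the Fisher-information inequality with $\mu$ optimized pointwise in $t$ via Lagrange multipliers (as in~\cite{madiman2007generalized}) produces a differential inequality whose integration from $0$ to $\infty$, combined with the equality at $t=\infty$ where all states collapse to the common Gaussian, yields~\eqref{eq:Generalized_quantum_Classical_EPI}. The main obstacle I expect is the variance-drop step: while the support claim for the lifted scores is conceptually clean, verifying it requires carefully unpacking the Weyl--Heisenberg representation defining the lift in Subsection~\ref{Sec:LiftingMap} and tracking its interaction with the projections $\E_k$ and $\E_\ell^c$. A secondary technical issue is the bookkeeping needed to preserve the ratio constraint $|\w|/|\s|={n^\prime}/\q$ under all simultaneous lifts, and to choose the scalings $\alpha_{\s,\w}(t)$ in the final step in a mutually consistent way.
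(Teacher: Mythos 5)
Your Fisher-information step is essentially the paper's own argument: reduce to the linear inner products via \eqref{eq:HiaRuskai}, lift the scores, and run the variance-drop decomposition of Subsection~\ref{Sec:Decomp} with Cauchy--Schwarz inside each block $\mathcal P_{\s',\w'}$. Your duality formulation (establish $\langle L_j,\widetilde R_{[\q],[n']}\rangle=\langle S_{\rho^{\boxplus[\q]}\star X^{\boxplus[n']},j},R\rangle$ for all $R$, then apply Cauchy--Schwarz with $R$ the full score) is equivalent to the paper's use of the orthogonal projection $\Pi$ onto the closure of $\{\widetilde R_{[\q],[n']}\}$ via Lemma~\ref{NiceInner}, and your justifications for the support condition $\mathcal P_{\s',\w'}\widetilde S_{\rho^{\boxplus\s}\star X^{\boxplus\w},j}=0$ unless $(\s',\w')\subseteq(\s,\w)$ and for the vanishing of the $(\varnothing,\varnothing)$ block are the right ones. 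That half is fine.

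The genuine gap is in the passage from the Fisher-information inequality to \eqref{eq:Generalized_quantum_Classical_EPI}. What the de Bruijn identity yields, for a \emph{fixed} distribution $\mu$, is the linear entropy inequality
\begin{equation*}
S\big(\rho^{\boxplus [\q]} \star X^{\boxplus [{n^\prime}]}\big) \geq  \sum_{(\s, \w) \in \mathcal{C}} \mu_{(\s, \w)} S\big(\rho^{\boxplus \s} \star X^{\boxplus \w}\big) + m  \sum_{(\s, \w) \in \mathcal{C}} \mu_{(\s, \w)} \log \frac{|\s|}{r\q \mu_{(\s, \w)}  },
\end{equation*}
and already obtaining this requires two ingredients absent from your sketch. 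First, the per-subset smoothing rates must be realized as genuine symmetric convolutions $Z^{\boxplus \w'_{(\s,\w)}}$ of a \emph{single} family of independent Gaussians $Z_\ell$ with individually tuned variances $h_\ell$, chosen so that $Z^{\boxplus\w'_{(\s,\w)}}$ has covariance $2b_{(\s,\w)}I_{2m}$ with $b_{(\s,\w)}=r\q\mu_{(\s,\w)}/|\s|$, while keeping the multiplicity bound $r$ and the ratio constraint; this construction is possible precisely because one imposes $r\mu_{(\s,\w)}\leq 1$, a constraint that never appears in your argument. Second, the states do not ``collapse to a common Gaussian'' at $t=\infty$: all entropies diverge like $m\log t$, and the content of the endpoint is the finite limit $\lim_{t\to\infty}F(t)=-m\sum_{(\s,\w)}\mu_{(\s,\w)}\log b_{(\s,\w)}$ (via \cite[Corollary 3.4]{KS14}), which is exactly what produces the logarithmic correction above. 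The exponential form then follows not from a pointwise-in-$t$ Lagrange optimization along the flow but from substituting the self-consistent weights $\gamma_{(\s,\w)}\propto|\s|\exp\big(S(\rho^{\boxplus\s}\star X^{\boxplus\w})/m\big)$ into the linear inequality, with a separate short argument for the case where $r\gamma_{(\s,\w)}\geq 1$ for some pair (so that the constraint needed for the Gaussian construction fails). Your ``Stam--Blachman sharpening with rates $\alpha_{\s,\w}(t)$ chosen consistently'' defers exactly the compatibility problem that the tuned-variance construction solves, and as stated it is a placeholder rather than a proof.
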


Here is the generalization of Theorem~\ref{Th:Shannon_Stam_KMB-intro} including classical random variables.

\begin{theorem}\label{Th:Shannon_Stam_KMB}
Let $\rho_1, \dots, \rho_{n}$ be $m$-mode bosonic quantum states with finite second moments, and $X_1, \dots , X_{n^\prime}$ be $\C^{m}$-valued, independent random vectors with finite variance. Let $\mathcal{C}$ be a collection of pairs $(\ve, \w)$ of subsets $\varnothing\neq\ve \subseteq [n]$ and $\w \subseteq [{n^\prime}]$. Let $r$ be the maximum number of times an index from $[n]$ or $[{n^\prime}]$ appears in elements in $\mathcal{C}$. Then, for any probability distribution $\mu$ on $\mathcal{C}$ we have
\begin{align}\label{eq:Shannon-Stam_KMB}
	I_\KMB\Big(\rho^{\boxplus [n]} \star_{\frac{n'}{n}} X^{\boxplus [{n^\prime}]}\Big) \leq r   n \sum_{(\ve, \w) \in \mathcal{C}} \frac{1}{| \ve |} \, \mu_{(\ve,\w)}^2 \, I_\KMB\Big(\rho^{\boxplus \ve} \star_{\frac{|\w|}{|\ve|}} X^{\boxplus \w}\Big).
\end{align}
\end{theorem}

We note that in the above theorems $n'$, the number of classical registers, can be equal to zero. This is why these theorems are generalizations of Theorem~\ref{Th:Quantum_Shannon} and Theorem~\ref{Th:Shannon_Stam_KMB-intro}. Moreover, inequality~\eqref{eq:Shannon-Stam_KMB} means that if the left hand side is infinite, then the right hand side is also infinite.

In the following, we first give the proof of Theorem~\ref{Th:Quantum_Classical_Shannon} assuming Theorem~\ref{Th:Shannon_Stam_KMB}, and then move on to the proof of Theorem~\ref{Th:Shannon_Stam_KMB}.

\subsection{Proof of Theorem~\ref{Th:Quantum_Classical_Shannon}}

For simplicity of presentation and conveying the main ideas, we first assume that $n'=0$ and there is no classical random variable, i.e., we first give the proof of Theorem~\ref{Th:Quantum_Shannon}. Later, we will discuss the more general case.

Let $\mu$ be a distribution on $\mathcal C$ satisfying
$$r  \mu_{\ve, \w}\leq 1, \qquad \forall (\ve, \w)\in \mathcal C.$$
Then, we can use Lemma~\eqref{lem:subsets-linear}  in Appendix~\ref{app:design} to find some $n'' \geq 1$ and a collection of subsets $\w'_{(\ve, \w)}\subseteq [n'']$ as well as real numbers $h_\ell$ for every $\ell\in [n'']$, such that the following hold:
\begin{itemize}
\item each element of $[n'']$ appears in at most $r$ subsets $\w'_{(\ve, \w)}$
\item $\sum_{\ell \in [n'']} h_{\ell} = n$
\item $\sum_{\ell \in \w'_{\ve, \w}}  h_\ell = r\mu_{\ve, \w} n.$
\end{itemize}
Now to any $\ell \in [n'']$ we associate a $\C^{2m}$-valued independent centered \emph{Gaussian} random variable $Z_\ell$ with covariance matrix $2h_\ell I_{2m}$. Define
$$F(t) : = S\bigg( \big( \rho^{\boxplus [n]} \star_{\frac{n'}{n}} X^{\boxplus [n']} \big) \star_{\frac{n''}{n}t}  Z^{\boxplus [n'']} \bigg) - \sum_{(\ve, \w) \in \mathcal{C}} \mu_{\ve, \w} S\bigg( \big( \rho^{\boxplus \ve} \star_{\frac{|\w|}{|\ve|}} X^{\boxplus \w}\big) \star_{\frac{|\w'_{(\ve, \w)}|}{|\ve|}t} Z^{\boxplus \w^\prime_{(\ve,\w)}}\bigg).$$
We note that the covariance matrix of $Z^{\boxplus \w'_{(\ve, \w)}}$ is equal to $2b_{(\ve, \w)}I_{2m}$ where
$$b_{(\ve, \w)}=\frac{1}{|\w'_{(\ve, \w)}|}\sum_{\ell\in \w'_{(\ve, \w)}} h_\ell = \frac{r\mu_{\ve, \w} n}{|\w'_{(\ve, \w)}|},$$
and similarly the covariance matrix of $Z^{\boxplus [n'']}$ is $2b_{[n'']}I_{2m}=2\frac{n}{n''}I_{2m}$.
Therefore, the quantum de Bruijn identity~\eqref{eq:deBruijn} yields
\begin{align}
\frac{\dd}{\dd t} F(t) & =  \frac{n''}{n} b_{[n'']}I_{\KMB}\bigg( \big( \rho^{\boxplus [n]} \star_{\frac{n'}{n}} X^{\boxplus [n']} \big) \star_{\frac{n''}{n}t}  Z^{\boxplus [n'']} \bigg)\nonumber \\
& \qquad  - \sum_{(\ve, \w) \in \mathcal{C}} \mu_{\ve, \w} \frac{|\w'_{(\ve, \w)}|}{|\ve|} b_{(\ve, \w)} I_{\KMB}\bigg( \big( \rho^{\boxplus \ve} \star_{\frac{|\w|}{|\ve|}} X^{\boxplus \w}\big) \star_{\frac{|\w'_{(\ve, \w)}|}{|\ve|}t} Z^{\boxplus \w^\prime_{(\ve,\w)}}\bigg)\nonumber \\
& = I_{\KMB}\bigg( \big( \rho^{\boxplus [n]} \star_{\frac{n'}{n}} X^{\boxplus [n']} \big) \star_{\frac{n''}{n}t}  Z^{\boxplus [n'']} \bigg)\nonumber \\
& \qquad  - \sum_{(\ve, \w) \in \mathcal{C}}  \frac{r \mu_{\ve, \w}^2 n}{|\ve|}  I_{\KMB}\bigg( \big( \rho^{\boxplus \ve} \star_{\frac{|\w|}{|\ve|}} X^{\boxplus \w}\big) \star_{\frac{|\w'_{(\ve, \w)}|}{|\ve|}t} Z^{\boxplus \w^\prime_{(\ve,\w)}}\bigg)\label{eq:Aux_Entropy_KMB}.
\end{align}
Define $\hat X_1, \dots, \hat X_{n'+n''}$by 
\begin{align*}
\hat X_\ell =\begin{cases} 
X_{\ell} ~~ & 1\leq \ell \leq n',\\
 \sqrt t Z_{\ell-n'} ~~  & n'+1\leq \ell \leq n'+n''.
\end{cases}
\end{align*}
Then, we have 
$$ \big( \rho^{\boxplus [n]} \star_{\frac{n'}{n}} X^{\boxplus [n']} \big) \star_{\frac{n''}{n}t}  Z^{\boxplus [n'']}  = \rho^{\boxplus [n]}\star_{\frac{n'+n''}{n}} \hat X^{[n'+n'']},$$
and 
$$\big( \rho^{\boxplus \ve} \star_{\frac{|\w|}{|\ve|}} X^{\boxplus \w}\big) \star_{\frac{|\w'_{(\ve, \w)}|}{|\ve|}t} Z^{\boxplus \w^\prime_{(\ve,\w)}} =  \rho^{\boxplus \ve} \star_{\frac{|\w|+|\w'_{(\ve, \w)}|}{|\ve|}} \hat X^{\boxplus \w\dot\cup \w'_{(\ve, \w)}},$$
where we understand $\w\dot\cup \w'_{(\ve, \w)}$ as $\w\cup \{\ell +n':\, \ell \in \w'_{(\ve, \w)}\}$. Hence, 
\begin{align*}
\frac{\dd}{\dd t} F(t) 
& = I_{\KMB}\bigg( \rho^{\boxplus [n]}\star_{\frac{n'+n''}{n}} \hat X^{[n'+n'']}\bigg) - \sum_{(\ve, \w) \in \mathcal{C}}  \frac{r \mu_{\ve, \w}^2 n}{|\ve|}  I_{\KMB}\bigg( \rho^{\boxplus \ve} \star_{\frac{|\w|+|\w'_{(\ve, \w)}|}{|\ve|}} \hat X^{\boxplus \w\dot\cup \w'_{(\ve, \w)}}\bigg).
\end{align*}
and Theorem~\ref{Th:Shannon_Stam_KMB} implies $\frac{\dd}{\dd t} F(t)\leq 0$ for all $t\geq 0$. As a result, 
\begin{align*}
F(0)&  \geq \lim_{t\to +\infty} F(t) \\
&= m\log\Big(\frac{n''}{n}b_{[n'']}\Big) - m \sum_{(\ve, \w)\in \mathcal C} \mu_{\ve, \w}\log\bigg( \frac{|\w'_{(\ve, \w)}|}{|\ve|} b_{(\ve, \w)}\bigg)\\
&= m \sum_{(\ve, \w)\in \mathcal C} \mu_{\ve, \w}\log\frac{|\ve|}{r\mu_{\ve, \w} n},
\end{align*}
where the computation of the limit in the second line is based on Lemma~\ref{lem:entropy-heat-semigroup}. As a conclusion, we have
\begin{align} \label{eq:Generalized_Shannon_Stam}
S\Big(  \rho^{\boxplus [n]} \star_{\frac{n'}{n}} X^{\boxplus [n']} \Big) \geq  \sum_{(\ve, \w) \in \mathcal{C}} \mu_{\ve, \w} S \Big( \rho^{\boxplus \ve} \star_{\frac{|\w|}{|\ve|}} X^{\boxplus \w}\Big) + m \sum_{(\ve, \w)\in \mathcal C} \mu_{\ve, \w}\log\frac{|\ve|}{r\mu_{\ve, \w} n}.
\end{align}

Now let 
$$\mu_{\ve, \w}=\frac{1}{\alpha}|\ve| \exp \Big( \frac 1m  S \Big( \rho^{\boxplus \ve} \star_{\frac{|\w|}{|\ve|}} X^{\boxplus \w}\Big) \Big),$$
where $\alpha=\sum_{(\ve, \w) \in \mathcal{C}} |\ve| \exp \Big( \frac 1m  S \Big( \rho^{\boxplus \ve} \star_{\frac{|\w|}{|\ve|}} X^{\boxplus \w}\Big) \Big)$ is the normalization factor. If $r\mu_{\ve, \w}\leq 1$ for all $(\ve , \w)\in \mathcal W$, then using   
$$\frac{1}{m} S \Big( \rho^{\boxplus \ve} \star_{\frac{|\w|}{|\ve|}} X^{\boxplus \w}\Big)  = \log \frac{\alpha \mu_{\ve, \w}}{|\ve|},$$
in~\eqref{eq:Generalized_Shannon_Stam} we find that
$$\frac{1}{m}S\Big(  \rho^{\boxplus [n]} \star_{\frac{n'}{n}} X^{\boxplus [n']} \Big) \geq\log  \frac{\alpha}{rn},$$
which is equivalent to~\eqref{eq:Generalized_quantum_Classical_EPI}. 

If there is $(\ve_0 , \w_0)\in \mathcal C$ for which $r\mu_{\ve_0, \w_0} > 1$ or equivalently $\alpha< r|\ve_0| \exp \Big( \frac 1m  S \Big( \rho^{\boxplus \ve_0} \star_{\frac{|\w_0|}{|\ve_0|}} X^{\boxplus \w_0}\Big) \Big)$, then we have
\begin{align*}
\frac{\alpha}{rn} & \leq \frac{|\ve_0|}{n}\exp \Big( \frac 1m  S \Big( \rho^{\boxplus \ve_0} \star_{\frac{|\w_0|}{|\ve_0|}} X^{\boxplus \w_0}\Big) \Big)\\
& \leq  \frac{|\ve_0|}{n }\exp \Big( \frac 1m  S \Big( \rho^{\boxplus [n]} \star_{\frac{n'}{n}} X^{\boxplus [n']}\Big) -\log \frac{|\ve_0|}{n} \Big) \\
&= \exp \Big( \frac 1m  S \Big( \rho^{\boxplus [n]} \star_{\frac{n'}{n}} X^{\boxplus [n']}\Big)\Big),
\end{align*}
where the second line follows from~\eqref{eq:Generalized_Shannon_Stam} for the collection $\mathcal C' = \{(\ve_0, \w_0)\}$ with the trivial distribution $\mu'_{\ve_0, \w_0}=1$ and the corresponding value $r'=1$.

\subsection{Proof of Theorem~\ref{Th:Shannon_Stam_KMB}}\label{Sec:ProofStamKMB}
Lemma~\ref{lem:Fisher-finite} and the integral representation~\eqref{eq:HiaRuskai} imply that 
\begin{align*}
I_\KMB(\rho) =  \sup_{\epsilon\in (0,1]}  \frac 12 \sum_{j=1}^{m} \int_{\epsilon}^1 \Big(\, \Big\| S_{\rho, j}^{1,t} \Big\|_{\rho,1, t}^2 + \Big\| S_{\rho, j}^{2,t} \Big\|_{\rho,2, t}^2 \Big) \dd \omega(t).
\end{align*}
Therefore, $I_\KMB(\rho)$, although may be infinite, is well-defined. Moreover, to prove the theorem,
it is sufficient to show that
$$
\Big\|S_{\rho^{\boxplus [n]} \star_{\frac{n'}{n}} X^{\boxplus [{n^\prime}]}, j}\Big\|_{\rho^{\boxplus [n]} \star_{\frac{n'}{n}} X^{\boxplus [{n^\prime}]}}^2 \leq rn \sum_{(\ve, \w) \in \mathcal{C}} \frac{1}{| \ve |}  \mu_{(\ve,\w)}^2  \Big\|S_{\rho^{\boxplus \ve} \star_{\frac{|\w|}{|\ve|}} X^{\boxplus \w}, j}\Big\|_{\rho^{\boxplus \ve} \star_{\frac{|\w|}{|\ve|}} X^{\boxplus \w}}^2,
$$
where the inner products and the score operators are defined with respect to any of the linear inner products considered in previous sections. Fix such an inner product and 
let $\Pi$ be the orthogonal projection on the closure of the linear space $\big\{ \widetilde R_{[n], [{n^\prime}]}:\,   \| R \|_{\rho^{\boxplus [n]} \star X^{\boxplus [{n^\prime}]}} < +\infty   \big\}$. Then, by Corollary~\ref{corollary:projection-score-square} for any subsets $ \ve \subseteq [n]$ and $\w \subseteq [{n^\prime}]$ we have
\begin{equation*}
 \sqrt{\frac{n}{|\ve|}}  \Pi\big(\widetilde S_{\rho^{\boxplus \ve} \star_{\frac{|\w|}{|\ve|}} X^{\boxplus \w}, j} \big) =   \widetilde S_{\rho^{\boxplus [n]} \star_{\frac{n'}{n}} X^{\boxplus [{n^\prime}]}, j}.
\end{equation*}
As a result, using the fact that $\mu$ is a probability distribution on $\mathcal{C}$, we have 
\begin{align}\label{eq:projection}
  \widetilde S_{\rho^{\boxplus [n]} \star_{\frac{n'}{n}} X^{\boxplus [{n^\prime}]}, j} =   \Pi \bigg( \sum_{(\ve, \w) \in \mathcal{C}}  \mu_{(\ve,\w)}  \sqrt{\frac{n}{|\ve|}} \widetilde S_{\rho^{\boxplus \ve} \star_{\frac{|\w|}{|\ve|}} X^{\boxplus \w}, j} \bigg).
\end{align}
Moreover, using the fact that  $T\mapsto \widetilde T_{[n], [n']}$ is an isometry yields
\begin{align*}
\Big\| S_{\rho^{\boxplus [n]} \star_{\frac{n'}{n}} X^{\boxplus [{n^\prime}]}, j}\Big\|_{\rho^{\boxplus [n]} \star_{\frac{n'}{n}} X^{\boxplus [{n^\prime}]}}^2
& = \Big\|   \widetilde S_{\rho^{\boxplus [n]} \star_{\frac{n'}{n}} X^{\boxplus [{n^\prime}]}, j} \Big\|_{\rho^{\otimes [n]}, X^{[n^\prime]}}^2 \\
& = \bigg\| \Pi \bigg( \sum_{(\ve, \w) \in \mathcal{C}}  \mu_{(\ve,\w)}  \sqrt{\frac{n}{|\ve|}} \widetilde S_{\rho^{\boxplus \ve} \star_{\frac{|\w|}{|\ve|}} X^{\boxplus \w}, j} \bigg) \bigg\|_{\rho^{\otimes [n]}, X^{[n^\prime]}}^2\\
&\leq \bigg\|  \sum_{(\ve, \w) \in \mathcal{C}}  \mu_{(\ve,\w)}  \sqrt{\frac{n}{|\ve|}} \,\widetilde S_{\rho^{\boxplus \ve} \star_{\frac{|\w|}{|\ve|}} X^{\boxplus \w}, j}   \bigg\|_{\rho^{\otimes [n]}, X^{[n^\prime]}}^2,
\end{align*}
Next, using~\eqref{VarianceDrop} we can write
\begin{align*}
&\Big\| S_{\rho^{\boxplus [n]} \star_{\frac{n'}{n}} X^{\boxplus [{n^\prime}]}, j}\Big\|_{\rho^{\boxplus [n]} \star_{\frac{n'}{n}} X^{\boxplus [{n^\prime}]}}^2 \\
\quad 
& \leq  \bigg\|  \sum_{(\ve, \w) \in \mathcal{C} , (\ve^\prime, \w^\prime) \subseteq (\ve , \w)}  \mu_{(\ve,\w)}  \sqrt{\frac{n}{|\ve|}}  \, \mathcal{P}_{\ve^\prime, \w^\prime} \big( \widetilde S_{\rho^{\boxplus \ve} \star_{\frac{|\w|}{|\ve|}} X^{\boxplus \w}, j} \big)   \bigg\|_{\rho^{\otimes [n]}, X^{[n^\prime]}}^2 \\
&= \bigg\| \sum_{(\ve^\prime , \w^\prime) \in \mathcal{C}} \mathcal{P}_{\ve^\prime, \w^\prime} \bigg( \sum_{ (\ve , \w) \supseteq (\ve^\prime, \w^\prime)}  \mu_{(\ve,\w)}  \sqrt{\frac{n}{|\ve|}} \,  \widetilde S_{\rho^{\boxplus \ve} \star_{\frac{|\w|}{|\ve|}} X^{\boxplus \w}, j} \bigg)   \bigg\|_{\rho^{\otimes [n]}, X^{[n^\prime]}}^2 \\
&= \sum_{(\ve^\prime , \w^\prime) \in \mathcal{C}} \bigg\| \sum_{ (\ve , \w) \supseteq (\ve^\prime, \w^\prime)}  \mu_{(\ve,\w)}  \sqrt{\frac{n}{|\ve|}} \,  \mathcal{P}_{\ve^\prime, \w^\prime} \bigg( \widetilde S_{\rho^{\boxplus \ve} \star_{\frac{|\w|}{|\ve|}} X^{\boxplus \w}, j} \bigg)   \bigg\|_{\rho^{\otimes [n]}, X^{[n^\prime]}}^2 \\
&\leq r\sum_{(\ve^\prime , \w^\prime)\in \mathcal{C}}  \sum_{ (\ve , \w) \supseteq (\ve^\prime, \w^\prime)} \mu_{(\ve,\w)}^2  \frac{n}{|\ve|}   \bigg\| \mathcal{P}_{\ve^\prime, \w^\prime} \bigg( \widetilde S_{\rho^{\boxplus \ve} \star_{\frac{|\w|}{|\ve|}} X^{\boxplus \w}, j} \bigg)   \bigg\|_{\rho^{\otimes [n]}, X^{[n^\prime]}}^2  \\
&= r\sum_{(\ve , \w)\in \mathcal{C}}  \mu_{(\ve,\w)}^2  \frac{n}{|\ve|}  \sum_{ (\ve^\prime, \w^\prime) \subseteq (\ve , \w)} \bigg\| \mathcal{P}_{\ve^\prime, \w^\prime} \bigg( \widetilde S_{\rho^{\boxplus \ve} \star_{\frac{|\w|}{|\ve|}} X^{\boxplus \w}, j} \bigg)   \bigg\|_{\rho^{\otimes [n]}, X^{[n^\prime]}}^2 \\
& = rn\sum_{(\ve , \w)\in \mathcal{C}}  \mu_{(\ve,\w)}^2  \frac{1}{|\ve|}  \Big\|  \widetilde S_{\rho^{\boxplus \ve} \star_{\frac{|\w|}{|\ve|}} X^{\boxplus \w}, j} \Big\|_{\rho^{\otimes [n]}, X^{[n^\prime]}}^2 \\
& = rn \sum_{(\ve, \w) \in \mathcal{C}}   \mu_{(\ve,\w)}^2 \frac{1}{| \ve |}  \Big\|S_{\rho^{\boxplus \ve} \star_{\frac{|\w|}{|\ve|}} X^{\boxplus \w}, j}\Big\|_{\rho^{\boxplus \ve} \star_{\frac{|\w|}{|\ve|}} X^{\boxplus \w}}^2,
\end{align*}
where by $(\ve^\prime , \w^\prime) \subseteq (\ve, \w)$, we mean $\ve^\prime \subseteq \ve$ and $\w^\prime \subseteq \w$. Here, the fifth line follows by the Cauchy--Schwarz inequality considering the fact that there are at most $r$ elements in each term corresponding to any $(\ve^\prime , \w^\prime) \subseteq (\ve, \w)$ if $\ve^\prime$ and $\w^\prime$ are not both empty sets. We should note that if $\ve^\prime$ and $\w^\prime$ are empty sets, then $$\mathcal{P}_{\ve^\prime, \w^\prime} \big( \widetilde S_{\rho^{\boxplus \ve} \star_{\frac{|\w|}{|\ve|}} X^{\boxplus \w}, j} \big) = 0.$$
Also, in the penultimate line we once again use the orthogonal decomposition~\eqref{VarianceDrop}, and the last line follows from the fact that  $T\mapsto \widetilde T_{\ve, \w}$ is an isometry.

\bigskip

\paragraph{Acknowledgements.} SB is supported by the Iran National Science Foundation (INSF) under project No.~4031370. HM is partially supported by NSF grant No.~2329662.

\appendix
\section{Trace class operators are dense}\label{app:trace-dense}

In this appendix we show that trace class operators are dense in the space of operators $X$ satisfying $\|X\|_{\rho}<+\infty$. Here, the inner product $\langle \cdot, \cdot \rangle_{\rho}$ is defined in terms of one of the functions $\psi_{k, t}$ for $k= 1, 2$ and $t\in [0,1]$. We note that these functions take the form $\psi(x, y) = cx+dy$ with $c, d\geq 0$, so the corresponding inner product is given by 
$$\langle X, Y\rangle_{\rho} = c \tr( \rho Y X^\dagger) + d\tr(\rho X^\dagger Y).$$

Let $\rho = \sum_{i=0}^{+\infty} \lambda_i \ketbra{e_i}{e_i}$ be the spectral decomposition of $\rho$, and for any $k\geq 0$ define $P_k = \sum_{i=0}^k \ketbra{e_i}{e_i}$. We note that $P_\ell XP_k$ is finite-rank and trace class. Thus, to prove our claim it suffices to verify that for any $\epsilon>0$, there are $k, \ell$ such that  $\|X - P_\ell XP_k\|_{\rho}<\epsilon$.  

Since $P_k$'s commute with $\rho$, we have  
$$\langle X, P_\ell XP_k\rangle_{\rho} =\langle  P_\ell XP_k, X\rangle_{\rho} = \|P_\ell XP_k\|_{\rho}^2 = c \tr(\rho P_\ell X P_k X^\dagger) + d\tr(\rho  X^\dagger P_\ell X P_k ),$$
which implies 
\begin{align*}
\|X - P_\ell XP_k\|_{\rho}^2 & = \|X\|_\rho^2 - \|P_\ell XP_k\|_{\rho}^2 .
\end{align*}
Therefore, to prove the claim it suffices to show that $\sup_{k, \ell}  \|P_\ell XP_k\|_{\rho}^2 =\|X\|_\rho^2$. To this end, we compute 
\begin{align*}
\|X\|_\rho^2 & = c \sum_{i=0}^{+\infty} \lambda_i \|X^\dagger \ket{e_i}\|^2 + d \sum_{i=0}^{+\infty} \lambda_i \|X \ket{e_i}\|^2\\
& = c \sup_{\ell}  \sum_{i=0}^{\ell} \lambda_i \|X^\dagger \ket{e_i}\|^2 + d\sup_{k }\sum_{i=0}^{k} \lambda_i \|X \ket{e_i}\|^2\\
& = c \sup_{\ell} \sup_k \sum_{i=0}^{\ell} \lambda_i \|P_k X^\dagger \ket{e_i}\|^2 + d\sup_{k } \sup_\ell \sum_{i=0}^{k} \lambda_i \|P_\ell X \ket{e_i}\|^2\\
& = c \sup_{k, \ell } \tr(\rho X P_kX^\dagger P_\ell) + d\sup_{k, \ell }  \tr(\rho   X^\dagger P_\ell X P_k)\\
& =  \sup_{k, \ell }\Big(c \tr(\rho P_\ell X P_kX^\dagger ) + d\tr(\rho   X^\dagger P_\ell X P_k)\Big)\\
& = \sup_{k, \ell}  \|P_\ell XP_k\|_{\rho}^2.
\end{align*}

\section{A combinatorial design and a system of linear equations}\label{app:design}

\begin{lemma}\label{lem:subsets-linear} 
Let $0< r \leq P $ be two positive integers, and $\mu$ be a probability distribution on the set $[P]$ such that $r \mu_j \leq 1$ for any $j \in [P]$. Then, there is some integer $n \geq 1$, subsets $\w_j \subseteq [n]$ for every $j \in [P]$, and non-negative reals $h_\ell \geq 0$ for every $\ell \in [n]$, such that each element $\ell \in [n]$ appears in at most $r$ subsets $\w_j$, and 
\begin{align}\label{eq:CombDesignCon}
\sum_{\ell \in [n]} h_\ell  = 1, \qquad \text{ and }\qquad \sum_{\ell \in \w_j} h_\ell = r \mu_j, \quad \forall j \in [P].
\end{align}
\end{lemma}
\begin{proof}
Let $A_r$ to be the convex set in $\R^P$ consisting of probability distributions on $[P]$ that satisfy $r\mu_j \leq 1$ for every $j \in [P]$. Also, let $B_r$ to be the set of all distributions $\mu$ in $A_r$ such that we can find desired parameters that satisfy~\eqref{eq:CombDesignCon}. We claim that $B_r$ is also a convex set, which contains extreme points of $A_r$.

To see the convexity of $B_r$, let $\mu^1$ and $\mu^2$ be two elements in $B_r$, and $\lambda \in [0,1]$ be an arbitrary number. For any $k\in \{1,2\}$, we indicate the corresponding parameters for $\mu^k$ by integers $n_k$, subsets $\w_j^k \subseteq [n_k]$ for any $j \in [P]$, and non-negative reals $h_\ell^k \geq 0$ for any $\ell \in [n_k]$, that satisfy
\[
\sum_{\ell \in [n_k]} h_\ell^k  = 1, \qquad \text{ and }\qquad \sum_{\ell \in \w_j^k} h_\ell^k = r \mu_j^k, \quad \forall j \in [P],
\]
and each element $\ell \in [n_k]$ appears in at most $r$ subsets $\w_j^k$. Now let $\mu = \lambda \mu^1 + (1-\lambda) \mu^2$ be an element in $A_r$. To show that $\mu$ also belongs to $B_r$, we need to construct parameters that satisfy~\eqref{eq:CombDesignCon}. 

To do so, let $n = n_1 + n_2$, and for any $j \in [P]$ define $\w_j \subseteq [n]$ as $\w_j \coloneqq \w_j^1 \cup \{ n_1 + s; \: s \in \w_j^2\}$. Also for any $\ell\in [n]$, we construct non-negative reals $h_\ell \geq 0$ as
\begin{align} 
	h_\ell = \begin{cases}
		 \lambda h_\ell^1, \quad 1 \leq \ell \leq n_1\\
		(1-\lambda) h_{\ell - n_1}^2, \quad n_1 +1 \leq \ell \leq n_1+n_2. 
	\end{cases}
\end{align}
We observe that, since each element $\ell_k \in [n_k]$ appears in at most $r$ subsets $\w_j^k \subseteq [n_k]$, each $\ell \in [n]$ also appears in at most $r$ subsets $\w_j \subseteq [n]$ by definition. Moreover, for every $j \in [P]$ we have 
\[
\sum_{\ell \in \w_j} h_\ell = \lambda \sum_{\ell \in \w_j^1} h_\ell^1 + (1-\lambda) \sum_{\ell \in \w_j^2} h_\ell^2 = \lambda r  \mu_j^1 + (1-\lambda) r  \mu_j^2 = r \mu_j.
\]
It is also straightforward to see that $\sum_{\ell \in [n]} h_\ell  = 1$. As a result, $\mu$ also belongs to $B_r$, and we can conclude that $B_r$ is a convex subset of $A_r$. 

For the extreme points of $A_r$, we observe that they are precisely those distributions $\mu$ on $[P]$ such that $\mu_j = \frac 1r$ for $r$ elements in $[P]$, and $\mu_j=0$ for all remaining elements in $[P]$. To construct the corresponding parameters for these extreme points, we may simply take $n=1$,  $\w_j = [1]$ whenever $\mu_j \neq 0$, and $\w_j = \emptyset$ otherwise, and let $h_1 = 1$. Since there are only $r$ index which have non-zero $\mu_j$, we can see that the only element of $[1]$ appears in exactly $r$ subsets $\w_{j}$, and the condition~\eqref{eq:CombDesignCon} is satisfied.

Thus, $B_r$ is a convex subset in $A_r$ that contains extreme points of $A_r$, and since $A_r$ itself is a convex set, we conclude that $A_r = B_r$.
\end{proof}

{\small
\bibliographystyle{abbrvurl} 
\bibliography{CLTBIB}
}

\end{document}